\newcolumntype{M}[1]{>{\centering\arraybackslash}m{#1}}
\newtheorem{theorem}{Theorem}
\newtheorem{definition}{Definition}
\journal{Signal Processing Journal}
\begin{document}

\begin{frontmatter}



\title{Event Driven Sensor Fusion}


\author{Siddharth Roheda$^\star$, Hamid Krim$^\star$, Zhi-Quan Luo$^\dagger$, Tianfu Wu$^\star$\\
$^\star$ ECE Department, North Carolina State University \\
$^\dagger$ ECE Department, University of Minnesota}


\begin{abstract}
	Multi sensor fusion has long been of interest in target detection and tracking. Different sensors are capable of observing different characteristics about a target, hence, providing additional information toward determining a target's identity. If used constructively, any additional information should have a positive impact on the performance of the system. In this paper, we consider such a scenario and present a principled approach toward ensuring constructive combination of the various sensors. We look at Decision Level Sensor Fusion under a different light wherein each sensor is said to make a decision on occurrence of certain events that it is capable of observing rather than making a decision on whether a certain target is present. These events are formalized to each sensor according to its potentially extracted attributes to define targets. The proposed technique also explores the extent of dependence between features/events being observed by the sensors, and hence generates more informed probability distributions over the events. In our case, we will study two different datasets. The first one, combines a Radar sensor with an optical sensor for detection of space debris, while the second one combines a seismic sensor with an acoustic sensor in order to detect human and vehicular targets in a field of interest. Provided some additional information about the features of the object, this fusion technique can outperform other existing decision level fusion approaches that may not take into account the relationship between different features. Furthermore, this paper also addresses the issue of coping with damaged sensors when using the model, by learning a hidden space between sensor modalities which can be exploited to safeguard detection performance.
\end{abstract}



\begin{keyword}



Sensor Fusion, Multi-Modal Fusion, Event Driven Classification

\end{keyword}

\end{frontmatter}


\section{Introduction}
\label{intro}
Often times more sensors are required in order to successfully detect and classify targets of interest. Additional sensors may provide supplementary information about a target, which can help the system make a more informed decision about its detection and classification. This data in turn often requires a degree of harnessing and fusion to seek an improved inference. Sensor fusion is generally known to broadly distinguish three levels of fusion, namely, data level, feature level, and decision level fusion. Data Level fusion generally processes raw data and performs fusion according to some criterion before making an inference. Feature level fusion, on the other hand, first gleans information from raw data (eg. transformed data) observed from diverse sensors, to subsequently coherently merge them for inference. In decision level fusion, each sensor reaches an individual decision, prior to optimal combination of these decisions to yield a more informed inference. The classical approach to decision level fusion is summarized in Figure \ref{DLF}. Over the years, we have seen classical techniques like Bayesian Fusion \cite{dempster2008generalization} and Dempster-Shafer Fusion\cite{shafer1976mathematical} used for combining sensors at the decision level. While more recently we have seen model based approaches \cite{thesis, li2001convex, florea2007critiques} that take into account the types of sensors that make up the network.
\begin{figure} 
	\centering
	\includegraphics[width = 0.69\textwidth]{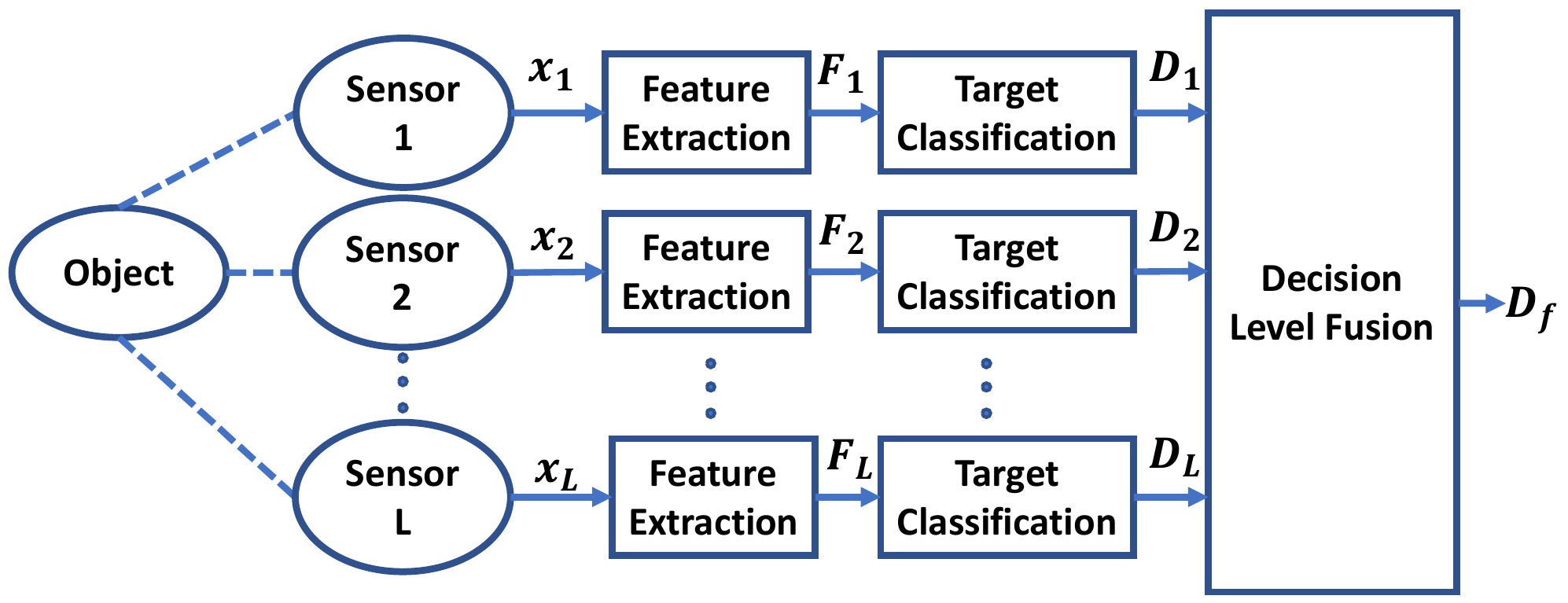}
	\centering
	\caption{Decision Level Fusion of multi-sensor observations}
	\label{DLF}
\end{figure}

\textbf{Our Contributions: } In this paper, we present a principled approach to decision level fusion for improved inference performance. A classification decision is reached by cataloging sets of events, along with the probabilistic characterization for each sensor, and following a joint probabilistic and coherent evaluation of these events. These events are formalized to each sensor according to its potentially extracted attributes to define targets. What this in effect achieves, is a probability measure assignment to a specific target following its description. Similar fusion inspired by feature events has been previously discussed in \cite{rohedaEDF}. Furthermore, we also address the practical situation where a sensor may be noisy or damaged and is no longer of use for fusion. We show that we can learn a hidden space between the sensors such that the fusion algorithm works around the damaged sensor, while achieving better performance than simply ignoring the damaged sensor, thus ensuring a graceful degradation. We formulate the problem of finding this hidden space by a criterion driven by the classification performance of a Support Vector Machine. In our case, we will study two different datasets. The first one, combines a Radar sensor with an optical sensor. A radar is used to explore the velocity of an object among other things, thus defining a sample space and a Sigma-Field with an associated probability measure, and is coupled to a telescopic sensor with an analogously associated probability space. This product space thus allows us to define a principled fusion framework with an improved and robust performance. Similarly, the second dataset will involve a seismic sensor, coupled with an acoustic sensor. 

\section{Related Work}
\label{rel_work}
As noted earlier, sensor fusion has long been of interest, albeit with limited theoretical success particularly when heterogeneous data are present, hence missing a unified and systematic approach which has remained elusive. An introduction and comprehensive survey to the area of fusion is provided in \cite{survey1,survey2}. As noted earlier, there has been significant research activity in decision level sensor fusion, starting from classical techniques like Bayesian Inference \cite{dempster2008generalization} and Dempster-Shafer Fusion \cite{shafer1976mathematical}. 
Bayesian Fusion has shown success when prior knowledge about sensor reports is available. On the other hand, Dempster-Shafer fusion was proposed to specifically lift such a restriction on the information prior,  at a cost of a substantial increase in computational complexity. In \cite{huang1993behavior}, a two-stage approach to sensor fusion was proposed, involving knowledge-modeling, which learns from past behavior of classifiers whose results are to be fused, and operation stage, that combines outputs of these classifiers based on knowledge learned in the first stage. 
More recent work in decision level fusion is based on the sensor network model \cite{thesis}. Here, the network is modeled as either being made up of similar or dissimilar sensors. Similar Sensor Fusion \cite{thesis, li2001convex, florea2007critiques}, is used when all the sensors explore the same characteristics/features of the target (for example, a set of 5 radars, looking at the same target), while Dissimilar Sensor Fusion \cite{thesis,florea2007critiques} is alternatively used when sensors explore different characteristics/features of the target (for example, a radar and optical sensor looking
at the same target). In \cite{thesis}, Similar Sensor Fusion using Kulback-Liebler distance was shown to be equivalent to a weighted averaging of the predictions from each sensor. 
These assumptions turn out to be too restrictive, in that some sensors, albeit dissimilar, may have some common features while offering additional features to
enrich an object/target characterization. Our goal is to explore
such a case, and demonstrate that a systematic and principled
approach may be designed, and our resulting overall solution is improved
on account of this enhancement. 

\subsection*{Mutual Information}
Consider two random variables, X and Y, with a joint probability mass $ p(x,y) $ and marginal probability mass functions $ p(x) $ and $ p(y) $. The Mutual Information, $ I(X;Y) $, is the relative entropy between the joint distribution, $ p(x,y) $, and the product distribution, $ p(x)p(y) $ \cite{IT_text}. The formula for Mutual Information is then given as:

\begin{equation} \label{MI}
	I(X;Y) = \sum_{x \in X}\sum_{y \in Y} p(x,y)log\frac{p(x,y)}{p(x)p(y)}.
\end{equation}

Further, the relationship between mutual information and joint entropy of X and Y is given as \cite{IT_text}, 

\begin{equation} \label{MI-JE}
	H(X,Y) = H(X) +H(Y) - I(X;Y)
\end{equation}   

\section{Problem Formulation}\label{pf}
Assume throughout a set of targets/objects, $ O = \{o_1, o_2, ..., o_I\} $, whose detection and/or classification are of interest. Let the $ k^{th} $ feature observed by the $ l^{th} $ sensor be $ F_k^l $. Then, a set of mutually exclusive events, $ \Omega_k^l = \{\omega_{kj}^l\}_{j=1, ..., J_{kl}} $, may be defined for the feature $ F_k^l $. Here, $ \omega_{kj}^l $ is the $ j^{th} $ event for $ F_k^l $ and is described as, $ \omega_{kj}^l: F_k^l \in [u_j, v_j) $, $ u_j \in  {\rm I\!R^+}$, $ v_j \in {\rm I\!R^+} $, and $ v_j > u_j $. For example an event may be: The target in view is traveling at $velocity < 5 m/h$. The probability report for the $ k^{th} $ feature from the $ l^{th} $ sensor is then defined as
\begin{equation}
	D_k^l = \{\Omega_k^l, \sigma_B(\Omega_k^l), P_k^l\}. 
\end{equation}
Where, $ \sigma_B(\Omega_k^l) $ is the Borel sigma algebra of $ \Omega_k^l $, and can be thought of as the set of all possible events that can be described over the feature. $ P_k^l $  is the set of probabilities over the events in $ \sigma_B(\Omega_k^l) $.

Let the $n^{th}$ observation made by the $l^{th}$ sensor be denoted by $\bm{x^l_n} = \{x^l_{n_q}\}_{q=1,...,Q}$, where $q$ corresponds to the signal value at time $q$. Furthermore, let $C_{kj}^{l} (\bm{x^l_n}) = {\bm{w}_{kj}^{{l}^T}} \bm{x^l_n}$ be a scoring function that gives a detection score, where, ${\bm{w}_{kj}^{l}}$ is the weight vector for a classifier trained to detect the event $a_{kj}^l: F_k^l \in [u_j, v_j)$. The bias term for the classifier can be modeled by appending a constant feature to each $\bm{x_n^l}$. Then, the probability of occurrence of the corresponding event is determined as
\begin{equation}
	\label{Pkl}
	P_k^l(a_{kj}^l) = \frac{exp({C_{kj}^{l}(\bm{x^l_n})})}{\sum_m exp({C_{km}^{l}(\bm{x^l_n})})}.
\end{equation}

Since, we can only define objects by a set of characteristic features, it follows that a combination of certain events occurring over different features will be used working in the product space, 
\begin{equation}
	\Omega = \Omega_1^1 \times \Omega_2^1 \times ... \times {\Omega}_{K_{1}}^1 \times \Omega_1^2 \times \Omega_2^2 \times ... \times {\Omega}_{K_{2}}^2 \times \Omega_1^L \times \Omega_2^L \times ... \times {\Omega}_{K_{L}}^L
\end{equation}
where, $ K_l $ is the total number of features observed by the $ l^{th} $ sensor, and $ l = 1, ..., L $. Further, an object will be defined as some combination of events in this product space, $ o_i \in \sigma_B(\Omega)$. Given the object definitions and the probability distributions over various features, our goal is to then find the fused probability report over the objects, $ D_f = \{O, P_f\} $.

\section{Proposed Method}
The sensor reports form a set $\{D_k^l\}_{k=1,...,K_L}^{l=1,...,L}$ which potentially are different sensors providing different features making up events which define targets. Specifically, the definitions of objects are the result of algebraic operations on the event space $\sigma_B(\Omega)$, a Sigma-algebra on the product space , $\Omega$, with associated probability measures as noted in Section \ref{pf}. Thus, we must evaluate the probability distribution on $\sigma_B(\Omega)$.	
\subsection{Determining object Probabilities}
Consider the events, $\gamma_k^l \in \sigma_B(\Omega_k^l)$, and the corresponding product space, $\Omega$. Then, for any combination, $Comb(\gamma_k^l) \in \sigma_B(\Omega)$, the object probability may be determined as, $P_f(o) = g(Comb(\gamma_k^l))$, where $g$ is a function that uses rules of probability to determine the fused object probability. Considering a 2-D setting, an object may be defined as a combination of events $\gamma_1 \in \sigma_B(\Omega_1)$, and $\gamma_2 \in \sigma_B(\Omega_2)$. The combination defined in the product space, $\Omega = \Omega_1 \times \Omega_2$, may be of the form $o: \{\gamma_1 \wedge \gamma_2\}$ or $o: \{\gamma_1 \vee \gamma_2\}$. Given the joint probability $P_\Omega$, rules of probability can be used to determine the fused object probability as follows:
\begin{itemize}
	\item $o: \{\gamma_1 \wedge \gamma_2\}: P_f(o) = P_\Omega(\gamma_1, \gamma_2)$
	\item $o: \{\gamma_1 \vee \gamma_2\}: P_f(o) = P_1(\gamma_1) + P_2(\gamma_2) - P_\Omega(\gamma_1, \gamma_2)$
\end{itemize}
Where, $P_1(\gamma_1)$ and $P_2(\gamma_2)$ are the marginal probabilities for detection of the events $\gamma_1$ and $\gamma_2$ as seen by sensors $1$ and $2$.

This can be easily extended to any number of features and combinations of more than two events. 

\subsection{Determining the Joint Probability}
When determining the joint probability in the product space, $\Omega$, it is important to account for the extent of dependence between the features: Completely independent features yield minimal mutual information, and the joint distribution with the minimum mutual information should be selected; a high dependence between features, on the other hand, yields maximal mutual information, and the joint distribution with maximal mutual information should be selected. These are clearly the extreme cases of dependence, and do not address the partial dependence case. To account for partially dependent features, a good approximation to the joint probability would be a convex combination of the joint probabilities maximizing and minimizing the mutual information. For ease of writing, we use $\gamma_1^1,...\gamma_k^l,...,\gamma_{K_L}^L$ to represent $\gamma_1^1,\gamma_2^1...,\gamma_{K_1}^1, \gamma_1^2,\gamma_2^2 ...,\gamma_{K_2}^2,...,\gamma_1^L,\gamma_2^L...,\gamma_{K_L}^L$. The joint probability of events $\gamma_k^l \in \sigma_B(\Omega_k^l)$, can then be determined as,
\begin{equation} \label{convcomb}
	\begin{split}
		P_\Omega(\gamma_1^1,...,\gamma_k^l,...,\gamma_{K_L}^L) =  \rho.P_{\Omega_{\text{MAXMI}}}(\gamma_1^1,...,\gamma_k^l,...,\gamma_{K_L}^L) +\\ (1-\rho).P_{\Omega_{\text{MINMI}}}(\gamma_1^1,...,\gamma_k^l,...,\gamma_{K_L}^L), 
	\end{split}
\end{equation} 
where, $\rho \in [0,1]$ is a pseudo-measure of extent of correlation between the features. $\rho \approx 1$ when features are highly correlated, and $\rho = 0$ when features are independent of each other. $\rho$ can be estimated from the training data by either computing the correlation between the features by using a measure like Pearson's correlation/distance correlation \cite{szkely2009} or by optimizing $\rho$ over the training data. 

It can be readily seen from Equation \ref{MI} that mutual information between two random variables is minimized when the joint probability distribution is selected as the product of the marginals, to yield,
\begin{equation}\label{minmi}
	P_{\Omega_{\text{MINMI}}}(\gamma_1^1,...,\gamma_k^l,...,\gamma_{K_L}^L) = \prod_{k,l} P_k^l(\gamma_k^l).
\end{equation} 
Maximizing mutual information on the other hand, when given the marginal probabilities requires additional work. Given some random variables X and Y, and conditioning on the marginal probability distributions of X and Y yields constant $H(X) \text{ and } H(Y)$. As may be seen from Equation  \ref{MI-JE}, the maximization of Mutual Information between two random variables then becomes equivalent to minimizing their Joint Entropy, which is known to be a concave function. The special case of $P_{\Omega_{MAXMI}} = P_{\Omega_{MINMI}}$ does not impact Eq. (6) as the equation will result in $P_{\Omega} = \rho P_{\Omega_{MINMI}} + (1-\rho) P_{\Omega_{MAXMI}} = P_{\Omega_{MINMI}}$.
\begingroup\makeatletter\def\f@size{8.5}\check@mathfonts
\begin{gather}
	\begin{align} 
		&\nonumber P_{\Omega_{\text{MAXMI}}} = \min_{P_\Omega}\text{ } \sum_{ \mathclap{\substack{b_1^1 \in \Omega_1^1,\\...,b_k^l \in \Omega_k^l,...,b_{K_L}^L \in \Omega_{K_L}^L}}}-P_\Omega(b_1^1,...,b_k^l,...,b_{K_L}^L)\log P_\Omega(b_1^1,...,b_k^l,...,b_{K_L}^L)\\
		&\nonumber \text{subject to: } \forall l \in \{1,...,L\}, \text{ } \forall k \in \{1,...,K_l\}, \\ 
		&\nonumber \quad \quad \quad \sum_{\mathclap{\{b_1^1,...,b_k^l,...,b_{K_L}^L\} \setminus b_k^l}} P_\Omega(b_1^1,...,b_k^l,...,b_{K_L}^L) = P_k^l(b_k^l),\\
		&\quad \quad \quad P_\Omega(b_1^1,...,b_k^l,...,b_{K_L}^L) \geq 0
	\end{align}
\end{gather}
\endgroup
\begin{figure*}[tbp]
	\centering
	\includegraphics[width = 0.95\textwidth]{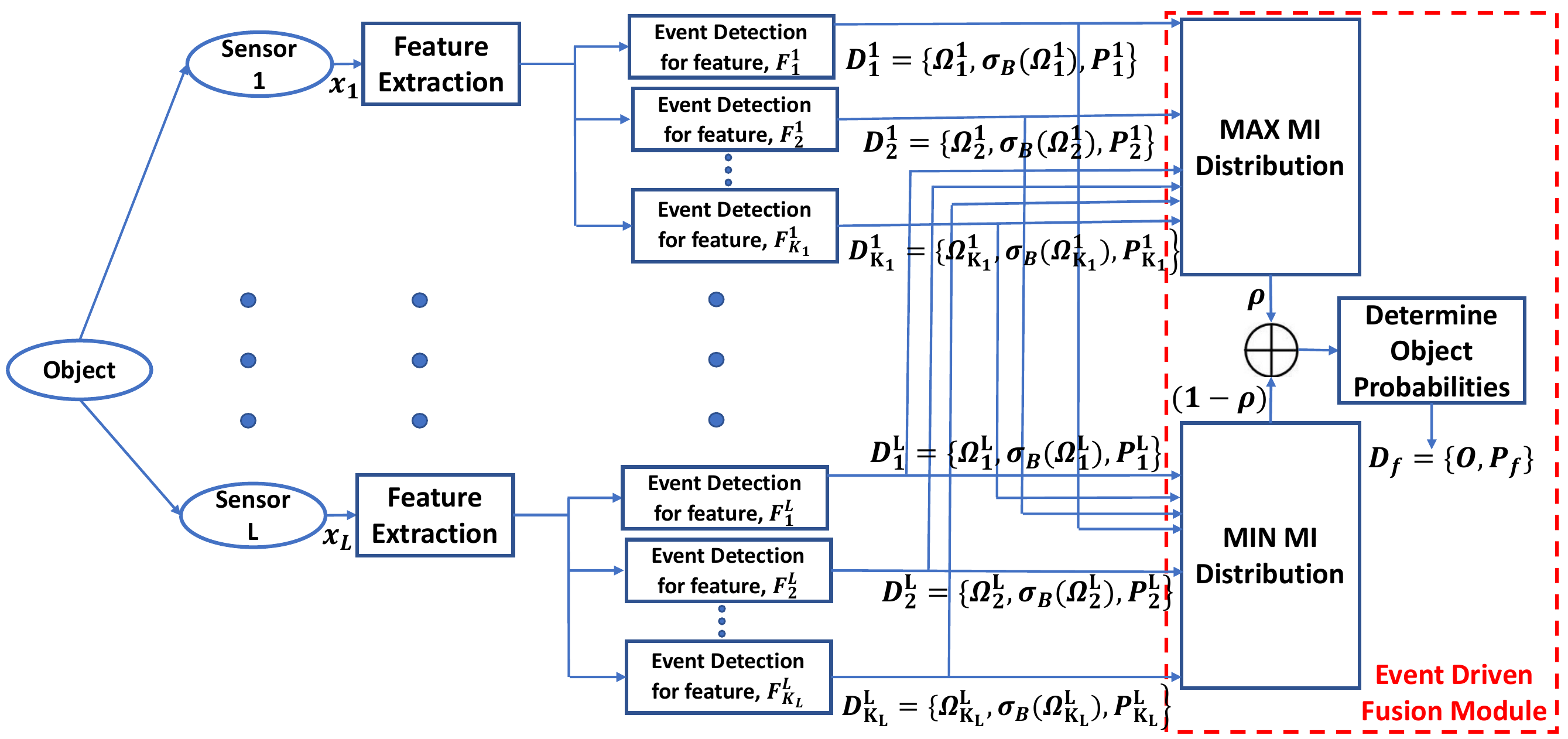}
	\centering
	\caption{Block diagram of the Event Driven Approach for fusion}
	\label{bd}
\end{figure*}
A greedy approach for minimizing joint entropy given the marginal probabilities can be constructed \cite{MAXMI} and is exploited to find the joint distribution with maximal mutual information. The main idea here is to keep large probability masses intact and not break them down into smaller chunks. The contribution of a probability mass toward the joint entropy only increases if it is divided into smaller chunks. That is, for $p = a+b$, $ -p.log(p) \leq -a.log(a) - b.log(b) $, when $ 0<p<1 $ and $ a,b > 0 $. So, keeping the large probability masses from given marginal probabilities intact ensures that their contribution towards the joint entropy is minimized. As empirically demonstrated in \cite{MAXMI}, the minimal joint entropies are obtained to within 1 bit of the optimal values. Figure \ref{bd} summarizes the steps of the proposed fusion approach in a block diagram.

\subsection{Robustness: Addressing damaged sensors}
\label{damaged_sensors}
In practice, sensor measurements may often be noisy, missing, or unusable in unconstrained surveillance settings, or just of limited capacity. In this scenario, it is common to ignore such sensors, with a potentially negative impact on optimal performance (i.e. all sensors are available and functional). We consider exploiting prior knowledge about the relationship between the various modalities, so that our system can safeguard a high detection accuracy. This prior knowledge resides in the training data, which is assumed to be available for all the modalities. Such a problem has previously been studied in \cite{roheda2018cross}, where Conditional Generative Adversarial Networks (CGAN) were used to replicate features of damaged sensors. This requires that the features for optimal classification be known before hand, so that the CGAN network can learn to replicate them, whereas, in our case we are searching a hidden space that is shared between sensor modalities, even with the absence of the optimal features. To that end, we propose to find linear operators that transform each sensor modality into a common hidden space\footnote{In \cite{wang2018fusing}, such a space was referred to as an `information subspace'}, so that it represents the shared information between the sensors. It is also important that we formulate the cost functional so that the determined hidden space is discriminative with respect to detection of event occurrences. We discuss two approaches to find such a hidden space. The Global Hidden Space (GHS) approach searches for a single common representation for all the features of interest, and transforms each sensor observation into this common space. On the other hand, the Independent Hidden Space (IHS) approach searches for an independent representation for each feature of interest. While IHS reduces the number of constraints on each hidden space, it increases the number of parameters as a transformation to each of these independent hidden spaces must be learnt.

\subsubsection{Global Hidden Space}
\label{GlobalHS}
\begin{figure*}
	\includegraphics[width = 0.9\textwidth]{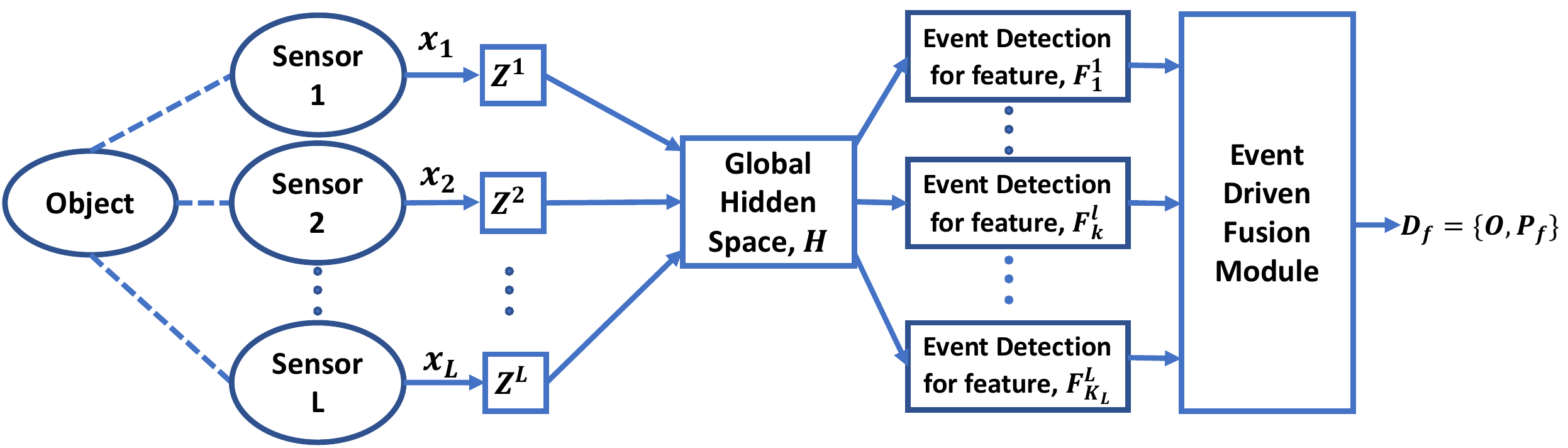}
	\centering
	\caption{Using a Global Hidden Space for Event Driven Fusion}
	\label{GHS}
\end{figure*}
We assume here the existence of a global space that can characterize all the features of interest, i.e. the same hidden space may be used to detect all feature events, as shown in Figure \ref{GHS}. We seek to find the linear operators, $\bm{Z^1, Z^2,...,Z^L}$, such that, 
\begin{equation}
	\label{GlobalTx}
	\forall l \in \{1,...,L\},\text{ }  \bm{Z}^{\bm{l}}_{d \times d_l} {\bm{X}^{\bm{l}}_{d_l \times N}} = \bm{H}_{d \times N}.  
\end{equation} 
Where, $\bm{X^l} = \{\bm{x^l_{n}}\}_{n=1,...,N}$, and $N$ is the number of training samples. The desired dimension of the hidden space is denoted by $d$, while the dimension of the observed signal from the $l^{th}$ sensor, by $d_l$ ($d < d_l, \forall l$). 

A key observation to our goal of determining a common subspace for different modalities, is that if a set of linear operators commute, they share common eigenvectors \cite{commoneigbasis, Frobenius_Commutation}. If these operators are furthermore individually diagonalizable, they will share all their eigenvectors, leading to a common eigenbasis/subspace. 
	\begin{definition}
	Linear operators $\bm{A} \in {\rm I\!R^{nxn}}$ and $\bm{B} \in {\rm I\!R^{nxn}}$ are said to commute if, 
	\begin{equation}
		[\bm{A}, \bm{B}] = \bm{A}\bm{B} - \bm{B}\bm{A} = 0.
	\end{equation}
\end{definition}

\begin{theorem} \label{commutation1}
	If $\bm{A} \in {\rm I\!R^{nxn}}$ and $\bm{B} \in {\rm I\!R^{nxn}}$ are commuting linear operators, they share common eigenvectors.
\end{theorem}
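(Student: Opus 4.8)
The plan is to prove existence of a common eigenvector by exploiting the central mechanism that commutation makes each eigenspace of one operator invariant under the other. First I would fix an eigenvalue $\lambda$ of $\bm{A}$ (working over $\mathbb{C}$ guarantees at least one exists, since the characteristic polynomial has a root) and consider its eigenspace $E_\lambda = \{\bm{v} : \bm{A}\bm{v} = \lambda \bm{v}\} \neq \{\bm{0}\}$. This reduces the problem from the whole space $\mathbb{R}^n$ to a single, typically lower-dimensional, invariant block.

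The key step is to verify that $E_\lambda$ is invariant under $\bm{B}$. For any $\bm{v} \in E_\lambda$, commutativity $\bm{A}\bm{B} = \bm{B}\bm{A}$ gives
\[
\bm{A}(\bm{B}\bm{v}) = \bm{B}(\bm{A}\bm{v}) = \bm{B}(\lambda \bm{v}) = \lambda(\bm{B}\bm{v}),
\]
so $\bm{B}\bm{v} \in E_\lambda$. Hence $\bm{B}$ restricts to a linear operator on the nonzero finite-dimensional space $E_\lambda$, which therefore possesses an eigenvector $\bm{w} \in E_\lambda$, say $\bm{B}\bm{w} = \mu\bm{w}$. Because $\bm{w}$ already lies in $E_\lambda$ we also have $\bm{A}\bm{w} = \lambda\bm{w}$, and thus $\bm{w}$ is a common eigenvector of $\bm{A}$ and $\bm{B}$. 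To upgrade this to the stronger statement the paper actually uses, namely that individually diagonalizable commuting operators share a full eigenbasis, I would decompose the space into the direct sum of the eigenspaces of $\bm{A}$, observe that each is $\bm{B}$-invariant by the identical computation, and then diagonalize the restriction of $\bm{B}$ on each block to assemble a basis of simultaneous eigenvectors.

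The main obstacle is the real/complex subtlety: a general real matrix need not have real eigenvalues, so the eigenvector of $\bm{A}$ guaranteed above, and likewise the eigenvector of the restricted $\bm{B}|_{E_\lambda}$, may be complex, and the argument is cleanest after complexifying both operators. In the paper's intended setting the linear operators are individually diagonalizable with real spectra, so this technicality does not arise and the shared eigenvectors can be taken to be real, yielding the common real eigenbasis invoked in the subsequent construction of the hidden subspace.
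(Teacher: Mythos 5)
Your proposal is correct and rests on the same key identity as the paper's proof: from $\bm{A}\bm{B}\bm{v} = \bm{B}\bm{A}\bm{v} = \lambda \bm{B}\bm{v}$ one concludes that $\bm{B}$ maps each eigenspace of $\bm{A}$ into itself. The difference is that you carry the argument one step further than the paper does. The paper's proof stops at the observation that $\bm{B}v_i$ is again an eigenvector of $\bm{A}$ for the same eigenvalue $\lambda_i$; strictly speaking this only establishes the $\bm{B}$-invariance of the eigenspaces of $\bm{A}$, and it yields a genuinely \emph{common} eigenvector only when the eigenspace is one-dimensional (so that $\bm{B}v_i$ must be a scalar multiple of $v_i$). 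You close that gap by restricting $\bm{B}$ to the invariant eigenspace $E_\lambda$ and invoking the existence of an eigenvector of that restriction, which produces a simultaneous eigenvector in all cases, including degenerate spectra. You also flag the real-versus-complex issue (a real matrix need not have a real eigenvalue), which the paper passes over silently by positing an eigenbasis of $\bm{A}$ at the outset. Both points are worth retaining: your version is the complete argument, while the paper's version is the special case that suffices when $\bm{A}$ has simple (real) eigenvalues, which is the regime implicitly assumed in the subsequent diagonalizability discussion of Theorem 2.
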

\begin{proof}
	Consider an eigenbasis $\mathcal{V} = \{v_i\}$, of $\bm{A}$ with $\lambda_i$ the eigenvalue associated to $v_i$. Then for any $v_i$, 
	\begin{equation}
		\bm{AB}v_i = \bm{BA}v_i = \lambda_i\bm{B}v_i,
	\end{equation}
	i.e., if $\bm{B}v_i \neq 0$, $\bm{B}v_i$ is an eigenvector of A, associated to the same eigenvalue as $v_i$, $\lambda_i$. 
\end{proof}
\begin{theorem}\label{commutation2}
	If $\bm{A} \in {\rm I\!R^{nxn}}$ and $\bm{B} \in {\rm I\!R^{nxn}}$ are commuting operators that are also individually diagonalizable, they share a common eigenbasis.
\end{theorem}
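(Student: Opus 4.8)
The plan is to build the common eigenbasis block-by-block using the eigenspace decomposition of $\bm{A}$. Since $\bm{A}$ is diagonalizable, ${\rm I\!R}^n$ splits as a direct sum of its eigenspaces, ${\rm I\!R}^n = \bigoplus_i E_{\lambda_i}$, where $E_{\lambda_i} = \ker(\bm{A} - \lambda_i \bm{I})$ and the $\lambda_i$ range over the distinct eigenvalues of $\bm{A}$. The idea is to diagonalize $\bm{B}$ separately within each eigenspace $E_{\lambda_i}$; any vector obtained this way is simultaneously an eigenvector of $\bm{A}$ (because it lies in an eigenspace of $\bm{A}$) and of $\bm{B}$, and collecting such vectors over all $i$ will yield a basis of the whole space.

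First I would show that each eigenspace $E_{\lambda_i}$ is invariant under $\bm{B}$. This is precisely the computation already carried out in the proof of Theorem \ref{commutation1}: for $v \in E_{\lambda_i}$ we have $\bm{A}(\bm{B}v) = \bm{B}(\bm{A}v) = \lambda_i \bm{B}v$, so that $\bm{B}v \in E_{\lambda_i}$. Hence the restriction $\bm{B}|_{E_{\lambda_i}}$ is a well-defined linear operator on $E_{\lambda_i}$.

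The crucial step, and the one I expect to be the main obstacle, is to argue that this restriction $\bm{B}|_{E_{\lambda_i}}$ is itself diagonalizable. I would handle this through the minimal polynomial: since $\bm{B}$ is diagonalizable over ${\rm I\!R}$, its minimal polynomial factors into \emph{distinct} linear factors. The minimal polynomial of the restriction $\bm{B}|_{E_{\lambda_i}}$ divides that of $\bm{B}$, so it too is a product of distinct linear factors, which is exactly the condition that guarantees diagonalizability. Consequently each $E_{\lambda_i}$ admits a basis consisting of eigenvectors of $\bm{B}$. The care required here is in justifying the divisibility of minimal polynomials on an invariant subspace; everything else is routine.

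Finally I would assemble the pieces. Within each $E_{\lambda_i}$ I pick a $\bm{B}$-eigenbasis; because these vectors lie in $E_{\lambda_i}$ they are automatically $\bm{A}$-eigenvectors with eigenvalue $\lambda_i$. Taking the union over all distinct $\lambda_i$ and invoking the direct-sum decomposition of ${\rm I\!R}^n$ produces a basis of ${\rm I\!R}^n$ every element of which is a common eigenvector of $\bm{A}$ and $\bm{B}$, establishing the shared eigenbasis claimed in the theorem.
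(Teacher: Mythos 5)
Your proof is correct, but it takes a genuinely different and more general route than the paper's. The paper's argument asserts that individual diagonalizability gives $\bm{A}$ and $\bm{B}$ each $n$ distinct eigenvalues, writes $\bm{A} = \bm{P}\bm{D_A}\bm{P}^{-1}$, and then concludes from Theorem \ref{commutation1} that the same $\bm{P}$ diagonalizes $\bm{B}$; this implicitly relies on every eigenspace of $\bm{A}$ being one-dimensional, so that $\bm{B}v_i$ being an $\bm{A}$-eigenvector for $\lambda_i$ forces $\bm{B}v_i \propto v_i$. That shortcut only covers the nondegenerate case (diagonalizable does not imply distinct eigenvalues). Your proof handles the general case: you decompose ${\rm I\!R}^n = \bigoplus_i E_{\lambda_i}$ into eigenspaces of $\bm{A}$, use the commutation identity to show each $E_{\lambda_i}$ is $\bm{B}$-invariant, and then argue that $\bm{B}|_{E_{\lambda_i}}$ is diagonalizable because its minimal polynomial divides the squarefree minimal polynomial of $\bm{B}$ — the one step the paper's argument cannot reach when eigenvalues repeat. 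Assembling $\bm{B}$-eigenbases of each $E_{\lambda_i}$ then yields the common eigenbasis. The paper's version is shorter and suffices when the spectra are simple; yours is the standard simultaneous-diagonalization argument and is the one that actually establishes the theorem as stated, at the cost of invoking the minimal-polynomial characterization of diagonalizability (whose restriction-divisibility step you correctly flag as the point needing care).
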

\begin{proof}
	If $\bm{A}$ and $\bm{B}$ are individually diagonalizable, they have $n$-distinct eigenvalues, i.e. $\bm{A}$ can be diagonalized as, $\bm{A} = \bm{P}\bm{D_A}\bm{P}^{-1}$, where, $\bm{D_A}$ is an $n \times n$ diagonal matrix with eigenvalues of $\bm{A}$ on the diagonal, and $\bm{P}$ is an $n \times n$ matrix with eigenvectors of $\bm{A}$ as columns. Since, both $\bm{A}$ and $\bm{B}$ share common eigenvectors (as seen in Theorem \ref{commutation1}), $\bm{B}$ can also be diagonalized as $\bm{B} = \bm{P}\bm{D_B}\bm{P}^{-1}$. Hence, $\bm{A}$ and $\bm{B}$ share a common eigenbasis.   
\end{proof}
As a result, if the operators $\bm{Z^1,...,Z^L}$ commute and are diagonalizable, they will share a common eigenbasis. Furthermore, since the transformation $ \bm{Z^lX^l}$ lies in the range space of the linear operator, $\bm{Z^l}$, $\bm{Z^lX^l}$ lie in a common subspace, $\forall l \in \{1,...,L\},$ due to the shared basis. This hence yields a common feature representation for the different modalities.

To ensure pairwise commutation between the linear operators, $\{\bm{Z^l}\}^{l \in \{1,...,L\}}$, we must make all the operator matrices square, which can in turn be accomplished by using sampled random matrices, $\{\bm{U^l}\}^{l \in \{1,...,L\}}$. Since, $\bm{U^l}$ is a random projection which will stay constant during the learning process, the information about the transformation, $(\bm{Z^lU^l})\bm{X^l}$, still lies in the range of $\bm{Z^l}$. This results in Equation \ref{GlobalTx} re-expressed as,
\begin{equation}
	\label{GlobalTxRand}
	\forall l \in \{1,...,L\},\text{ }  \bm{Z}^{\bm{l}}_{d \times d} \bm{U^l}_{d \times d_l} {\bm{X}^{\bm{l}}_{d_l \times N}} = \bm{H}_{d \times N}.
\end{equation} 
To optimize the event detection on the basis of this hidden feature space we proceed to train a classifier for event detection, and to learn the operators $\bm{Z^l}$. Let, $\bm{W_k^l} = \{w_{kj}^l\}_{j=1,...,J_{K_l}}$, be the weight matrix for classification of events $\Omega_k^l = \{\omega_{kj}^l\}_{j=1,...,J_{K_l}}$ defined over the $k^{th}$ feature from the $l^{th}$ sensor. We build on the SVM formulation \cite{crammer-singer} to uncover the optimal hidden space with sufficient information to successfully detect all events over all features. To that end, the operators, $\bm{Z^1,...,Z^L}$ are sought by an optimization of an energy cost functional which includes a penalty term to encourage pairwise commutation of their application on the various sensor data. The objective then becomes, 
\begingroup\makeatletter\def\f@size{9}\check@mathfonts
\begin{gather} \label{ghs_opt}
	\begin{align}
		&\nonumber \min_{\bm{W_{k}^l}, \bm{\xi_k^l}, \bm{Z^l}} J(\bm{W_{k}^l}, \bm{\xi_{k}^l}, \bm{Z^l}) = \frac{1}{2} \sum_{l=1}^L \sum_{k=1}^{K_l} ||\bm{W_k^l}||^2 + C_1 \sum_{l=1}^L \sum_{n=1}^N \xi_{k_n}^{l}\\ 
		&\nonumber + \frac{1}{2} \sum_{\substack{l,m=1 \\ l\neq m}}^L (C_2||[\bm{Z^l}, \bm{Z^m}]||^2 + C_3 \sum_{n=1}^N (\bm{Z^l} \bm{U^l} \bm{x^l_n} - \bm{Z^m} \bm{U^m} \bm{x^m_n})^2)\\
		&\nonumber \text{subject to: } \\
		&\nonumber \forall l \in \{1,...,L\}, \forall k \in \{ 1,...,K_l\},\text{ }\\
		&\nonumber w_{k_{y_n}}^{l^T} (\bm{Z^l U^l x^l_n}) -  w_{k_{t}}^{l^T} (\bm{Z^l U^l x^l_n}) \geq 1 - \xi_{k_n}^l,\\ 
		&\nonumber \quad \quad \quad \quad \quad \quad \quad \quad \quad \quad \quad \quad \quad \quad \quad t \in \{1,...,J_{K_l}\} \setminus \{y_n\},\\
		& \xi_{k_n}^l \geq 0.
	\end{align}  
\end{gather}
\endgroup
The above cost functional does not guarantee the individual diagonalizability of the operators, $\bm{Z^1,...,Z^l}$, we, however, empirically observe that resulting operators on convergence are diagonalizable in most cases. The quadratic error constraint in Equation \ref{ghs_opt} encourages corresponding projected samples from different modalities to be close.
	\begin{figure*} 
	\centering
	\includegraphics[width = 0.97\textwidth]{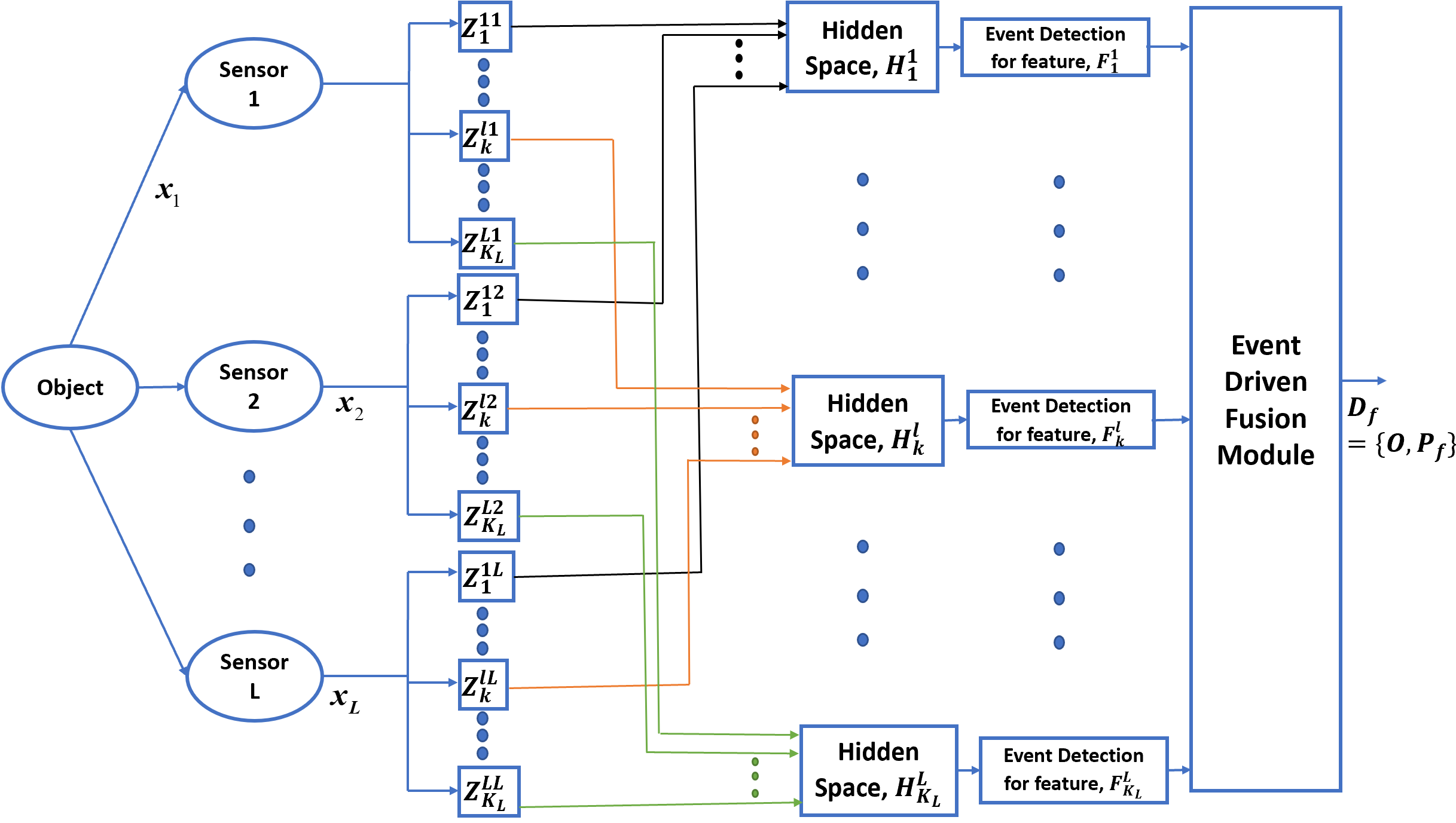}
	\centering
	\caption{Using Independent Hidden Spaces for Event Driven Fusion}
	\label{IHS}
\end{figure*}
\subsubsection{Independent Hidden Spaces}
\label{IndependentHS}
To reduce the computational complexity due to the number of constraints, we seek to find independent hidden spaces for the features of interest, as illustrated in Figure \ref{IHS}. We thus seek linear operators $\bm{Z_1^{11},...,Z_k^{lr},...,Z_{k_L}^{LL}}$, such that,
\begin{equation}
	\begin{split}
		\forall l,r \in \{1,...,L\}, \forall k \in \{1,...,K_l\}, \quad \bm{Z}_{\bm{k}_{d \times d_r}}^{\bm{lr}} \bm{X}^{\bm{r}}_{d_r \times N} = \bm{H}_{{\bm{k}}_{d \times N}}^{\bm{l}},
	\end{split} 
\end{equation}
where, $\bm{Z_k^{lr}}$ transforms observed data, $\bm{X^r}$, from the $r^{th}$ sensor to the hidden space, $\bm{H_k^l}$.
This means, we now have a individual hidden space describing each feature of interest ($F_k^l$), making the number of cost functions that are independently optimized equivalent to the total number of features of interest $\sum_l K_l$. Furthermore, as in Section \ref{GlobalHS}, we again introduce the randomly sampled matrices, $\{\bm{U_k^{lr}}\}_{k \in \{1,...,K_l\}}^{l,r \in \{1,...,L\}} $, in order to help enforce pairwise commutation between the transformations,
\begin{equation}
	\label{IHS_formuln}
	\begin{split}
		\bm{Z}_{\bm{k}_{d \times d}}^{\bm{lr}} \bm{U}_{\bm{k}_{d \times d_r}}^{\bm{lr}} \bm{X}^{\bm{r}}_{d_r \times N} = \bm{H}_{\bm{k}_{d \times N}}^{\bm{l}},
		\forall l,r \in \{1,...,L\}, \forall k \in \{1,...,K_l\}.
	\end{split} 
\end{equation}
The solution in Equation \ref{IHS_formuln}, i.e. the hidden space for the $k^{th}$ feature from the $l^{th}$ sensor $\bm{H_k^l} = \bm{Z_k^{lr} U_k^{lr}X^r}$, is obtained by minimizing the following objective,
\begin{gather}
	\label{ihs_opt}
	\begin{align}	
		&\nonumber \min_{\bm{W_{k}^l},\bm{ \xi_k^{lr}}, \bm{Z_k^{lr}}} J(\bm{W_k^l}, \bm{\xi_k^{lr}}, \bm{Z_k^{lr}}) = \frac{1}{2} ||\bm{W_k^l}||^2 + C_1\sum_{r=1}^L \sum_{n=1}^N \xi_{k_n}^{lr} \\
		&\nonumber \quad \quad + \frac{1}{2}\sum_{\substack{r,s=1\\ r\neq s}}^L (C_2||[\bm{Z_k^{lr}}, \bm{Z_k^{ls}}]||^2 + C_3 \sum_{n=1}^N (\bm{Z_k^{lr}U_k^{lr}x^r_n} - \bm{Z_k^{ls}U_k^{ls}x^s_n})^2)\\
		&\nonumber \text{subject to: }\\
		&\nonumber \forall r \in\{1,...,L\},\\
		& \nonumber \quad w_{k_{y_n}}^{l^T}(\bm{Z_k^{lr}U_k^{lr}x^r_n}) - w_{k_t}^{l^T}(\bm{Z_k^{lr}U_k^{lr}x^r_n}) \geq 1-\xi_{k_n}^{lr}, \quad  t \in \{1,...,J_{K_l}\} \setminus \{y_n\},\\
		& \quad \xi_{k_n}^{lr} \geq 0.
	\end{align}
\end{gather}
The above conditions are satisfied by setting,
\begin{equation} \label{xi_n}
	\xi_{k_n}^{lr} = \max_{t \neq y_n} \{0, 1 - w_{k_{y_n}}^{l^T}(\bm{Z_k^{lr}U_k^{lr}x_n^r}) + w_{k_t}^{l^T}(\bm{Z_k^{lr}U_k^{lr}x_n^r}) \},
\end{equation} 
Using $\xi_{k_n}^{lr}$ from Equation \ref{xi_n} into Equation \ref{ihs_opt} we get,
\begingroup\makeatletter\def\f@size{8.5}\check@mathfonts
\begin{gather}
	\begin{align}
		&\nonumber \min_{\bm{W_k^l}, \bm{Z_k^{lr}}} J(\bm{W_k^{l}}, \bm{Z_k^{lr}}) = \frac{1}{2} ||\bm{W_k^l}||^2 + C_1 \sum_{r,n} \max_{t \neq y_n} \{0, 1 - w_{k_{y_n}}^{l^T}(\bm{Z_k^{lr}U_k^{lr}x_n^r}) + w_{k_t}^{l^T}(\bm{Z_k^{lr}U_k^{lr}x_n^r})\}\\
		&\quad + \frac{1}{2} \sum_{\substack{r,s=1 \\ r \neq s}}^L (C_2||[\bm{Z_k^{lr}}, \bm{Z_k^{ls}}]||^2 + C_3\sum_{n=1}^N (\bm{Z_k^{lr}U_k^{lr}x^r_n} - \bm{Z_k^{ls}U_k^{ls}x^s_n})^2)
	\end{align}
\end{gather}
\endgroup
In order to update the variables, $\bm{W_k^l}$ and $\bm{Z_k^{lr}}$, a derivative of $J(\bm{W_k^l}, \bm{Z_k^{lr}})$ must be computed with respect to each variable. Let $V_{t_n} = 1 - w_{k_{y_n}}^{l^T}(\bm{Z_k^{lr}U_k^{lr}x_n^r}) + w_{k_t}^{l^T}(\bm{Z_k^{lr}U_k^{lr}x_n^r})$, $\forall t \in \{1,...,J_{K_l}\} \setminus y_n$ and $t_n' = \text{argmax}_{t \neq y_n} V_{t_n}$. Then we have, $\forall m \in \{1,...,J_{K_l}\}$,
\begingroup\makeatletter\def\f@size{9}\check@mathfonts
\begin{equation}
	\begin{split}
		\frac{dJ(\bm{W_k^l}, \bm{Z_k^{lr}})}{dw_{k_{m}}^l} = w_{k_{m}}^l + \mathcal{I}(V_{t'_n} > 0).C_1 \sum_{r,n} [\mathcal{I}(m=t)\bm{Z_k^{lr} U_k^{lr} x_n^r}\\ - \mathcal{I}(m={y_n})\bm{Z_k^{lr} U_k^{lr} x_n^r}]
	\end{split}
\end{equation}
\endgroup
$\forall r \in \{1,...,L\}$,
\begingroup\makeatletter\def\f@size{8}\check@mathfonts
\begin{equation}
	\begin{split} 
		\frac{dJ(\bm{W_k^l}, \bm{Z_k^{lr}})}{d\bm{Z_k^{lr}}} = \mathcal{I}(V_{t'} > 0).C_1 \sum_{n} (- w_{k_{y_n}}^{l^T} (\bm{U_k^{lr} x_n^r}) + w_{k_{t'}}^{l^T} (\bm{U_k^{lr} x_n^r}))\\ + \sum_{s} [C_2([\bm{Z_k^{lr}, Z_k^{ls}}]\bm{Z_k}^{\bm{ls}^T} - \bm{Z_k^{ls}}[\bm{Z_k^{lr}, Z_k^{ls}}]) \\ + C_3(\sum_n \bm{U_k^{lr} x_n^r}(\bm{Z_k^{lr} U_k^{lr} x_n^r} - \bm{Z_k^{ls} U_k^{ls} x_n^s}))]
	\end{split}
\end{equation}
\endgroup
where, $\mathcal{I}$ is the indicator function,
\begin{equation}
	\mathcal{I}(a) =
	\begin{cases}
		1, \text{ if } \text{a is true}\\
		0, \text{ otherwise}.
	\end{cases}
\end{equation}
These derivatives are then used to update the variables at each iteration,
\begin{gather}
	\begin{align}
		&\nonumber \forall r \in \{1,...,L\}, \\
		&w_{k_m}^{l^{\{i+1\}}} = w_{k_m}^{l^{\{i\}}} - \mu \frac{dJ(\bm{W_k^l}, \bm{Z_k^{lr}})}{dw_{k_m}^l},\\ 
		&\bm{Z_k}^{{\bm{lr}}^{\{i+1\}}} = \bm{Z_k}^{{\bm{lr}}^{\{i\}}} - \mu \frac{dJ(\bm{W_k^l}, \bm{Z_k^{lr}})}{d\bm{Z_k^{lr}}}.
	\end{align}
\end{gather}
where, $i$ denotes the iteration number, and $\mu$ is the learning rate. 

If the $m^{th}$ sensor is damaged, the hidden spaces for this sensor are recovered from the available set of sensors, $\Gamma = \{1,...,L\} \setminus m$,
\begin{equation}
	\forall k \in \{1,...,K_m\}, \bm{H_k^m} = \frac{\sum_{r \in \Gamma} \bm{Z_k^{mr} U_k^{mr} X^r}}{|\Gamma|},
\end{equation}
where, $|\Gamma|$ is the cardinality of $\Gamma$.  
Figure \ref{Independent_HS_recovery} illustrates a scenario with two sensors, $l\in \{1,2\}$, and demonstrates the recovery of independent hidden spaces for features defined for a damaged sensor $(l=1)$, using observed data of the available sensor $(l=2)$.
\begin{figure}[h!] 
	\centering
	\includegraphics[width = 0.75\textwidth]{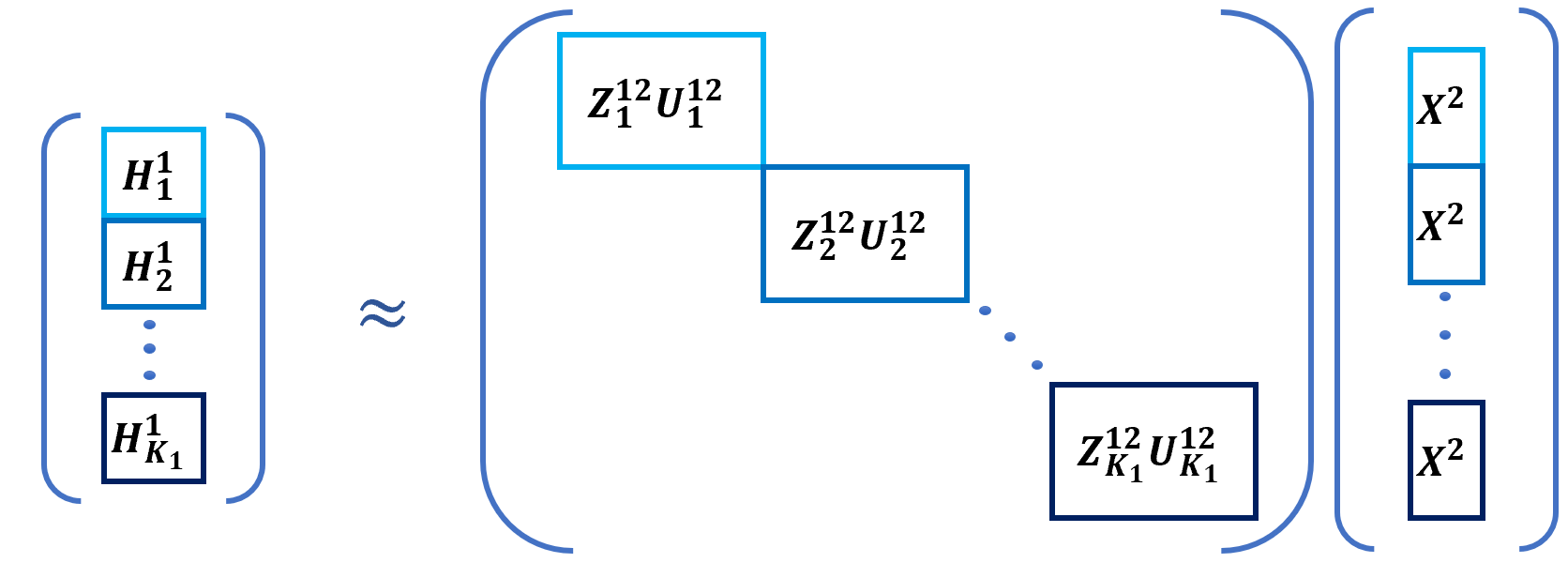}
	\centering
	\caption{Hidden Space recovery for a two sensor scenario, $l\in \{1,2\}$.  hidden space for features defined for damaged sensor, $l=1$, are recovered by observations from sensor, $l=2$}
	\label{Independent_HS_recovery}
\end{figure}

\textbf{Example:}
To visualize the result of this algorithm at convergence, consider a toy example with two modalities, $\bm{X^1} \in {\rm I\!R^{4 \times N}}$ and $\bm{X^2} \in {\rm I\!R^{3 \times N}}$, for binary classification. The random projections $\bm{U^1} \in {\rm I\!R^{2 \times 4}}$ and $\bm{U^2} \in {\rm I\!R^{2 \times 3}}$ are first used to project the data onto a 2-dimensional space, as seen in Figure \ref{toy_eg}-(a). Following this, we use the proposed approach to find hidden spaces, and project each modality onto a common hidden space, $\bm{H}_{2 \times N} \approx \bm{Z^1}_{2 \times 2} \bm{U^1}_{2 \times 4} \bm{X^1}_{4 \times N} \approx \bm{Z^2}_{2 \times 2} \bm{U^2}_{2 \times 3} \bm{X^2}_{3 \times N}$. Figures \ref{toy_eg}-(b),(c) show the data transformation into the hidden space for two cases: 1) When no penalty was enforced for non-commuting operators (Figure \ref{toy_eg}-(b)), and 2) When commutation between $\bm{Z^1} \text{ and } \bm{Z^2}$ was enforced (Figure \ref{toy_eg}-(c)). As may be seen from the determined hidden space in both cases, commutation is able to push the determined hidden subspace to be common for both modalities. Furthermore, it can be seen that the common classifier (denoted by the black solid line), which is learned jointly with the linear operators, is a compromise between the optimal classifiers for each modality (i.e. the SVM classifier learned for each modality individually in the transformed space). 
\begin{figure}
	\centering
	\subfloat[Random projection of $\bm{X^1} \text{ and } \bm{X^2}$ into a 2-d space]{\includegraphics[width = 0.33\textwidth]{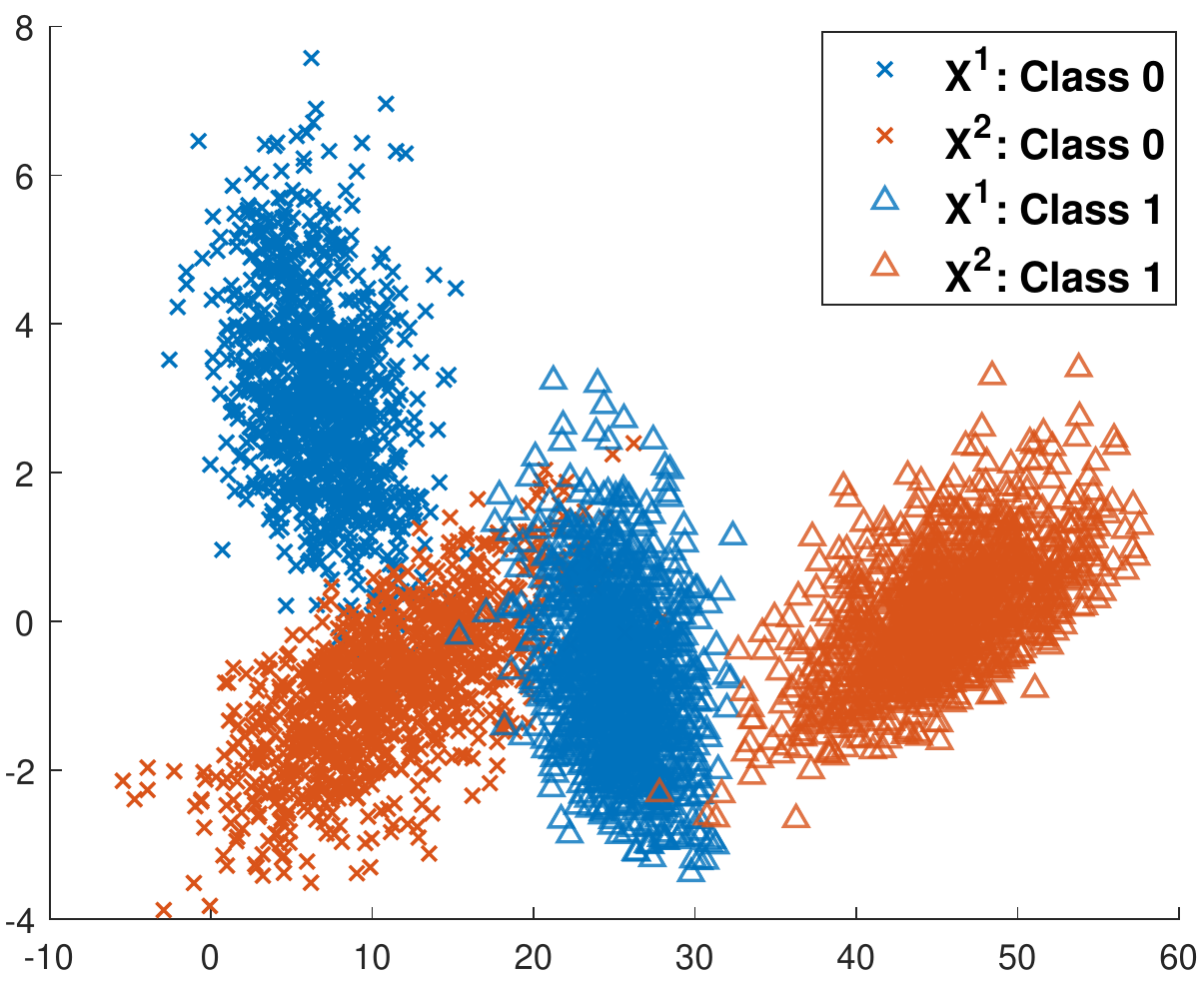}} \hfill
	\subfloat[Transformations $\bm{Z^1U^1X^1}$ and $\bm{Z^2U^2X^2}$ as determined without enforcing commutation between $\bm{Z^1} \text{ and } \bm{Z^2}$]{\includegraphics[width = 0.33\textwidth]{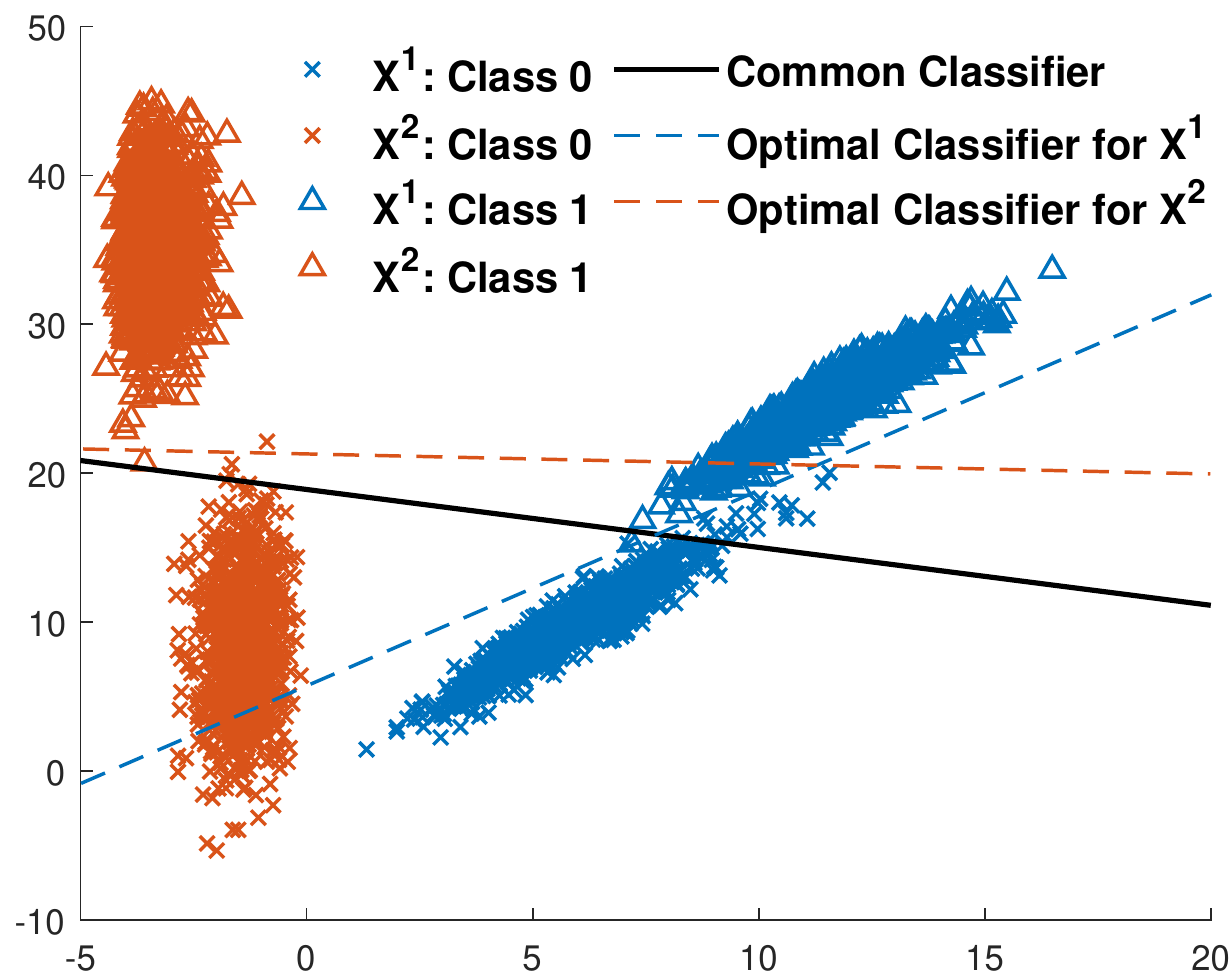}} \hfill
	\subfloat[Transformations $\bm{Z^1U^1X^1}$ and $\bm{Z^2U^2X^2}$ as determined on enforcing commutation between $\bm{Z^1} \text{ and } \bm{Z^2}$]{\includegraphics[width = 0.33\textwidth]{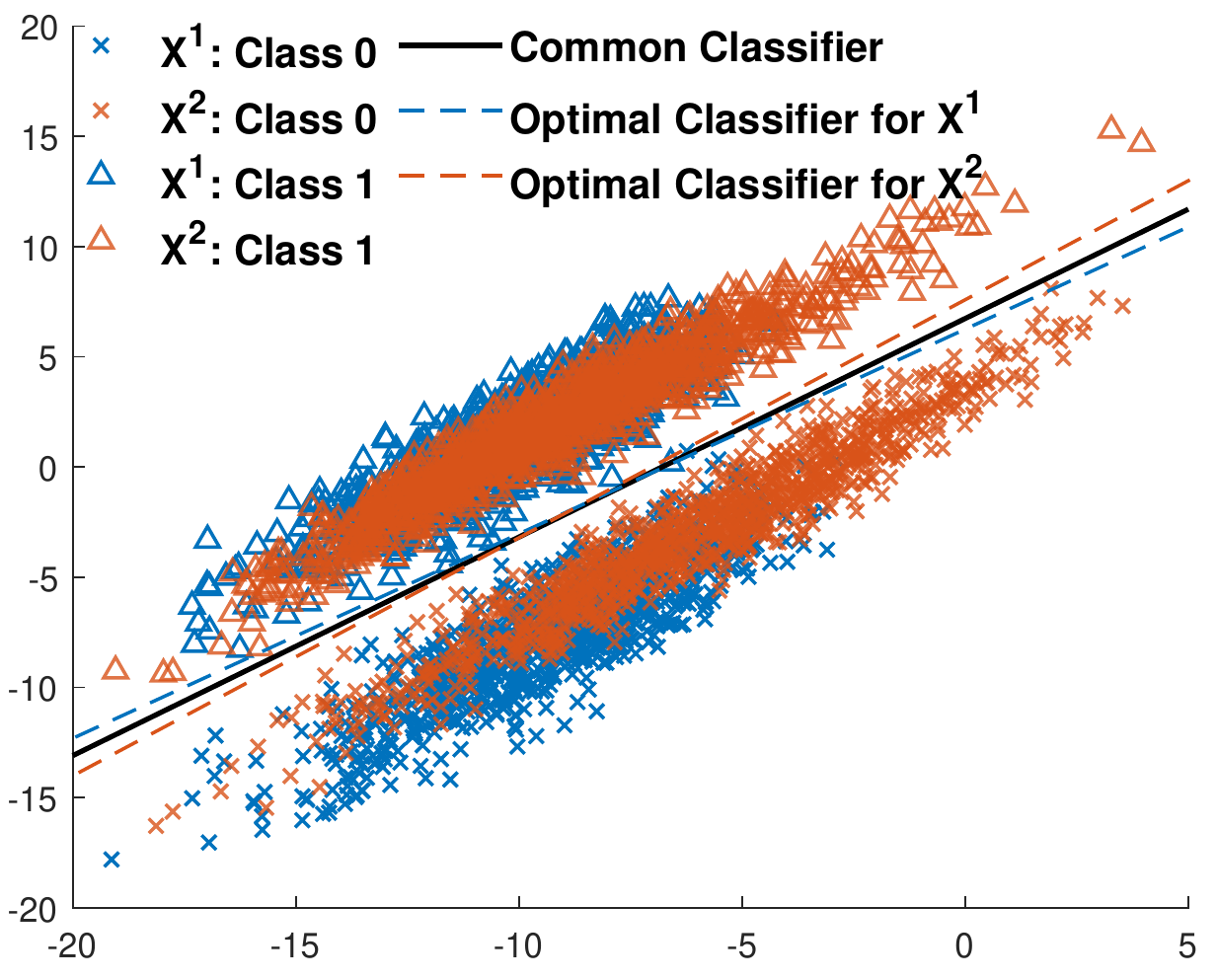}}
	\caption{}
	\label{toy_eg}
\end{figure}   

\section{Experiments and Results} \label{exp}
To substantiate the above proposed approach in the various scenarios, we select two different datasets.
\subsection{Dataset 1: Radar and Telescopic Imaging Sensors}
For the first dataset, we select two sensors, namely a Radar sensor and a telescopic optical sensor, the latter having been measured and collected by Jen-Hung Wang and the TAOS team. Due to technical difficulty in the field experiment, radar measurements were simulated according to the physical data of the space debris and matched with the optical data. Both sensors are ideally synchronized when observing a given target, which in our case, is a space object as just noted. The radar simulations (obtained through MATLAB Simulink, whose block diagram can be seen in Figure \ref{MATLAB_Sim}), together with telescopic image data are used in our first experiment. Each generated radar signal over one second is correlated with two telescopic images. Samples of objects with different velocities, cross-sections, ranges, and aspect-ratios are generated. The radar signals are used to make decisions over velocity, range, and the cross-section, while the telescopic images are used to make decisions over the aspect-ratio, and displacement over time of an object in view.

	\subsubsection{Experiment Design} \label{impl1}
\begin{figure}[h!] 
	\centering
	\includegraphics[width = 0.9\textwidth]{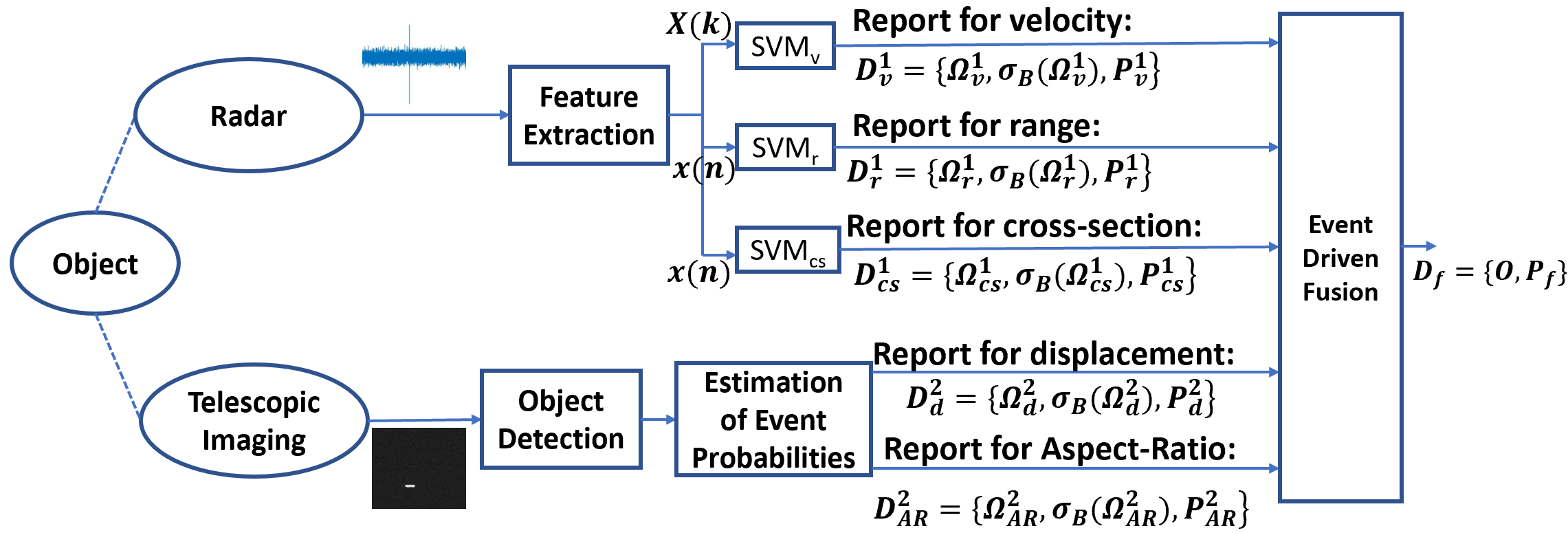}
	\caption{Event Driven Fusion for object detection in Dataset 1} 
	\label{EDF_Space_BD}
\end{figure}
\begin{figure*}[tbp] 
	\centering
	\includegraphics[width = 0.93\textwidth, height= 0.16\textheight]{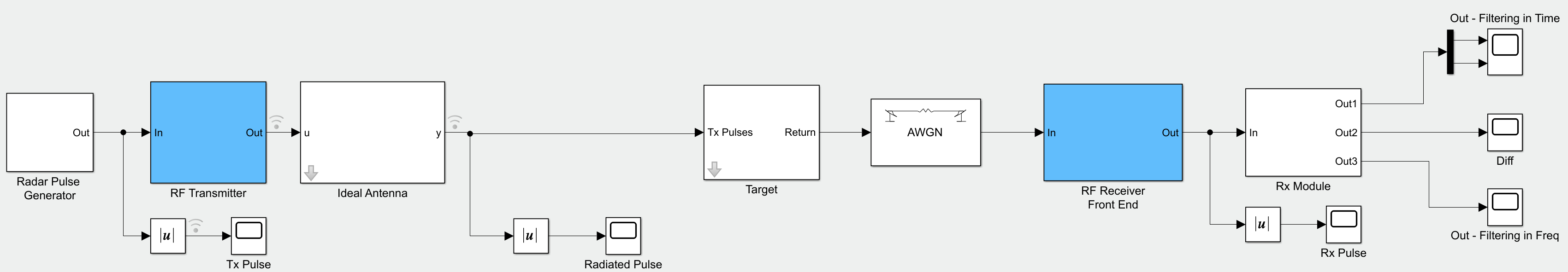}
	\caption{MATLAB Simulink Block Diagram for generation of simulated radar signals} 
	\label{MATLAB_Sim}
\end{figure*}
To proceed with the algorithmic evaluation, we must first generate distributions of features needed for target specification. Figure \ref{EDF_Space_BD} shows a high level block diagram for implementing an Event Driven Fusion for dataset 1. 
Let the received radar signal be, $x(n)$, and its corresponding Fourier transform, $ X(k) = \sum_{n=0}^{N-1} x(n)e^{(-i 2 \pi k n/N)} $ for each object, with associated labels distinguished by the object velocity, range, and cross-section values of that object, $ [v, r, cs] $. Using the training data, and the corresponding labels, SVM classifiers are trained over the events of interest defined over $[v,r,cs]$, and used to determine the classification probabilities for the event of interest, $P_k^l(\omega_{kj}^l)$, as described in Equation \ref{Pkl}. For the $k^{th}$ feature from $l^{th}$ sensor we train the SVM classifier using the Crammer-Singer formulation for multi-class SVM \cite{crammer-singer},
\begin{gather}
	\begin{align}
		&\nonumber \min_{\bm{W_k^l}, \bm{\xi_k^l}} \frac{1}{2} ||\bm{W_k^l}||^2 + C\sum_{n=1}^N \xi_{k_n}^l,\\
		&\nonumber \text{subject to: }\\
		&\nonumber w_{k_{y_n}}^{l^T}(\bm{x_n^l}) - w_{k_t}^{l^T} (\bm{x_n^l}) \geq 1 - \xi_{k_n}^l, \quad t \in \{1,...,J_{K_l}\},\\ 
		&\xi_{k_n}^l \geq 0,
	\end{align}
\end{gather}
where $y_n \in \{1,...,J_{K_l}\}$ is the true label of the $n^{th}$ data sample, and $N$ is the total number of data samples available for training.

We have two telescopic images associated with 1-sec of radar return for the same object. The object of interest in the telescopic image is first detected using target detection as discussed in \cite{bian}. Upon its detection, the probability distribution over the object's aspect-ratio, and its displacement in the second image relative to its location in the first image is determined using the image flow technique discussed in \cite{singh1992image}.
\paragraph{Object Detection: } The object of interest in the telescopic imagery is initially detected by using target detection as discussed in \cite{bian}. Any pixel in the telescopic image domain is said to follow the probabilistic model, $I(o) = i_o + n, n \sim G(\mu_n, \sigma_n^2)$, where $o$ is the object that the pixel belongs to. Points in the image domain are said to belong to one of three sets in regards to the statistics of their neighborhoods\cite{bian}: 
	\begin{itemize}
	\item \textbf{Background Set:} For points in the image domain, the set of all background points is given as, 
	\begin{equation}
		W = \{p \in W|\forall q \in N(p), f(p) = f(q) = G(\mu_n,\sigma_n^2)  \},
	\end{equation}
	where, $f$ is the probability density function, and $\mathcal{N}(p)$ is the neighborhood of point $p$.
	\item \textbf{Interior Set:} The set of all interior points (in regards to objects) is defined as,
	\begin{equation}
		S = \{p \in S|\forall q \in N(p), f(p) = f(q) \neq G(\mu_n, \sigma_n^2)  \},
	\end{equation}
	where, $f$ is the probability density function, and $\mathcal{N}(p)$ is the neighborhood of point $p$.
	\item \textbf{Boundary Set:} The set of all boundary points is defined as,
	\begin{equation}
		B = \{p \in B|\exists q \in N(p), f(p)\neq f(q), f(p)\neq G(\mu_n, \sigma_n^2)  \},
	\end{equation}
	where, $f$ is the probability density function, and $\mathcal{N}(p)$ is the neighborhood of point $p$.
\end{itemize}
Based on the above definitions, a hypothesis test is performed in order to find the interior set of a given image. Let, $f_n = G(\mu_n, \sigma_n^2)$ be the background distribution, and $f_p = G(\mu_p, \sigma_p^2)$ be the distribution of object pixels, then,
\begin{gather}
	\nonumber H_0: p \sim f_p, q \sim f_p, \forall q \in N(p),\\
	H_1: p \sim f_n \text{ or } q \sim f_q \neq f_p, \forall q \in N(p). 
\end{gather}
After recognizing the interior points, the next crucial step is to cluster the interior points with respect to objects, for which a proper distance measure is important. The following metric, which reflects both the physical properties, such as, the apparent magnitude and  spatial relations is defined in \cite{bian},
\begin{gather}
	\begin{align}
		&\nonumber \forall p_a, p_b \in S,\\
		&\nonumber d(p_a, p_b) = d_{\text{Euclidean}} (p_a,p_b) + d_{\text{Intensity}} (p_a,p_b)\\
		&\quad \quad \quad \quad = \sqrt{(p_a^x - p_b^x)^2 + (p_a^y - p_b^y)^2} + \beta |I(p_a) - I(p_b)|,
	\end{align}
\end{gather}
where, $\beta$ balances the contribution of intensity distance and euclidean distance. The corresponding distance matrix is then used as an input to a clustering algorithm. Using single linkage clustering, two sets of pixels, $A$ and $B$ are said to belong to the same cluster if,
\begin{equation}
	\min\{d(p_a, p_b): p_a \in A, p_b \in B\} < \gamma
\end{equation}
The cut-off distance $\gamma$ can be estimated from the training data, and depends on the expected size of objects.
Such a clustering algorithm also identifies night sky stars as objects, which may not necessarily be of interest. In order to identify objects of interest, the ratio of width and height of an object in the image domain is used as a criterion. The spread of pixels representing an object (or the aspect-ratio of the object) is defined as $R(o) = W(o)/H(o)$, then, $o$ is an object of interest if, 
\begin{equation}
	|R(o)-c| > median\{R\},
\end{equation} 
\begin{figure*}[tbp]
	\subfloat[ ]{\includegraphics[width = 0.33\textwidth]{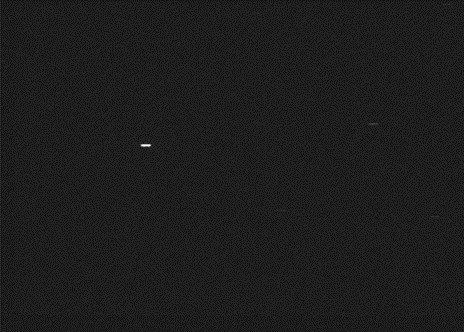}} \hfill 
	\subfloat[ ]{\includegraphics[width = 0.33\textwidth]{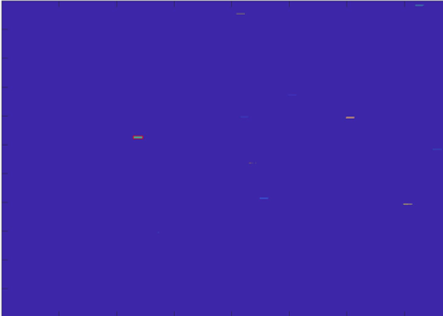}} \hfill
	\subfloat[ ]{\includegraphics[width = 0.33\textwidth, height=0.15\textheight]{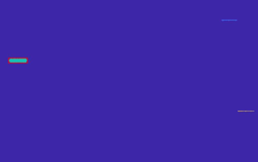}} 
	\caption{(a): Original Telescopic Image, (b): Detected Object of interest is marked by a red outline, (c): Detected Object of interest is marked by a red outline(zoomed in)}
	\label{clustering}
\end{figure*}
where, $c$ is the parameter measuring the distance between an object of interest and stars, and $median\{R\}$ is the median of width-height ration for all objects.
Figure \ref{clustering} shows the clustering results, with a subsequent detection of object of interest for a sample telescopic image. Although not visible to the naked eye in Figure \ref{clustering}-(a), there are stars in the background, which are detected by the clustering algorithm, and can be seen in Figure \ref{clustering}-(b),(c).
\paragraph{Displacement Estimation} For two successive images, $I_1$ and $I_2$, captured by the telescopic sensor, a point $P(x,y)$ in $I_1$ moves to $P(x+u, y+v)$ in $I_2$, for which the displacement vector, $(u,v)$ is of interest. A correlation window of size, $\{max[W(o), H(o)] \times max[W(o), H(o)]\}$ is defined about the centroid of the target of interest in $I_1$. An error distribution is subsequently computed over a search window $(I_2)$ by using sum of squared distances,
\begin{equation}
	E(\mathcal{K},\mathcal{L}) = \sum_{i,j = -N}^N [I_1(x+i, y+j) - I_2(\mathcal{K}+i, \mathcal{L}+j)]^2,
\end{equation}
where, $N = \frac{max[W(o), H(o)]}{2}$, $0 < \mathcal{K} < W(I_2)$, and $0 < \mathcal{L} < H(I_2)$. This error distribution can then be converted into a probability distribution as, 
\begin{equation}
	P_d(\mathcal{K},\mathcal{L}) = e^{(-E(\mathcal{K},\mathcal{L})/z)},
\end{equation}
where, $z$ is a scaling factor. Furthermore, given a position $(x,y)$ of the object in image $I_1$, we get the probability distribution over $(u,v)$, $P_d(u,v) = e^{(-E(x+u, y+v)/z)}$. The probability of an event over displacement of the object can then be determined as,
\begin{equation}
	p(a<d<b) = \sum_{u,v} \mathcal{I}(a<d<b).P_d(u,v),
\end{equation}
where, $d = \sqrt{u^2 + v^2}$, and $\mathcal{I}$ is the indicator function. Figure \ref{pd} shows the estimation of this probability distribution for a sequence of two images.
\begin{figure}[!h]
	\centering
	\subfloat[Image 1 – Location of object of interest: $(x,y) = (429, 932)$, Image 2 – Location of object of interest: $(x+u, y+v) = (503, 932)$]{\includegraphics[width = 0.6\textwidth]{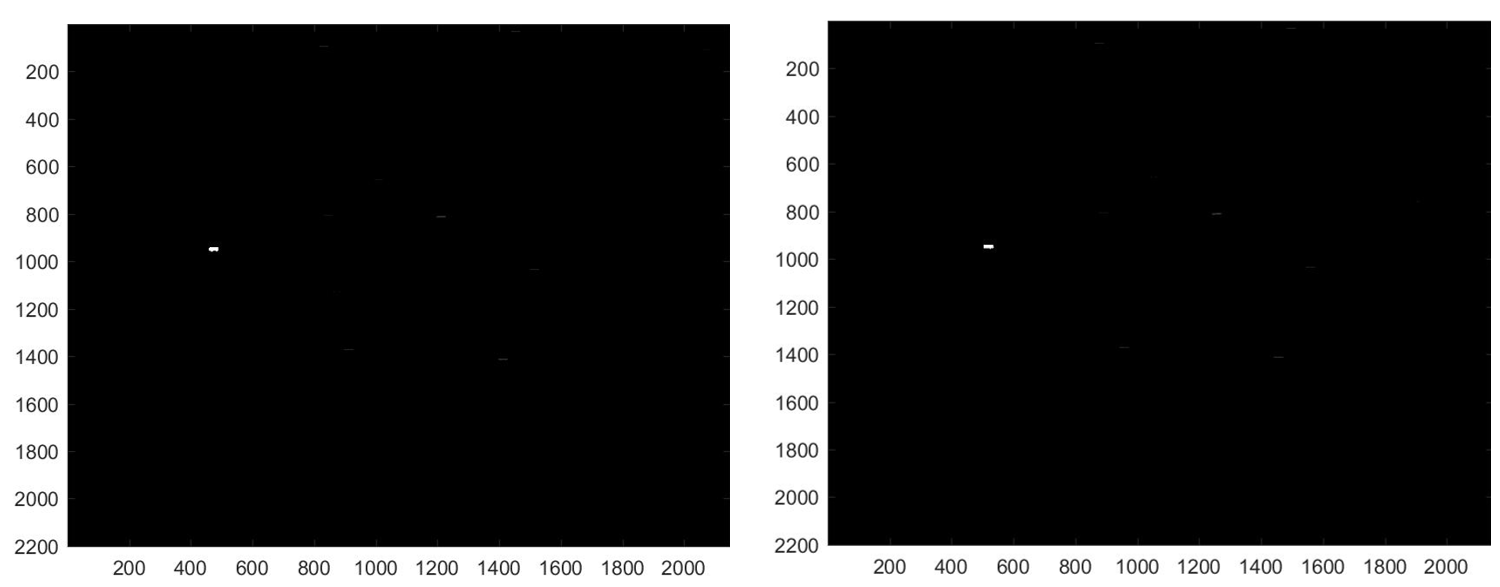}} \hfill
	\subfloat[ max probability located at: $(x+u, y+v) = (502, 932)$ ]{\includegraphics[width = 0.6\textwidth]{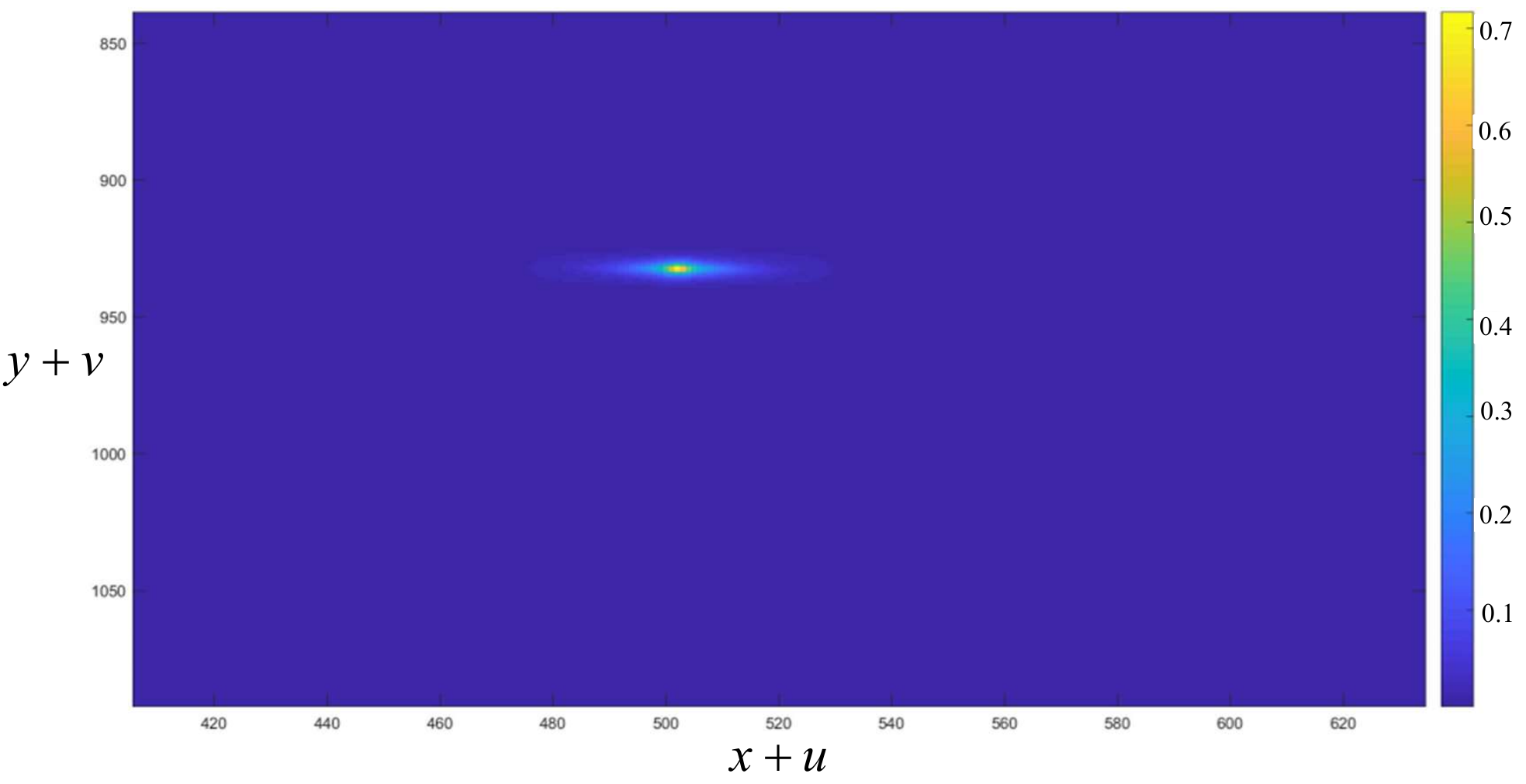}} \hfill
	\caption{(a): Sequence of two consecutive images from telescopic sensor, (b): Probability distribution over location of object of interest in $I_2$}
	\label{pd}
\end{figure}

\subsubsection{Object and Event Definitions}
For training and testing purposes, we define various events over the feature-sets from both sensors. Since real radar observations were not available for the first dataset, we have simulated two types of objects. The first is a ‘dangerous’ object defined as being close to the sensors, having a large cross section or aspect ratio or has a high velocity. On the other hand, the second object is defined as something ‘safe’ hence far from the sensors, has a small cross section and aspect ratio, and also a small velocity. Note that all operators in the definition of a ‘safe’ object are ‘and’ operators since it is important for the object to satisfy all of these constraints to be classified as a safe object. On the other hand, there is a ‘or’ between velocity and aspect ratio in the ‘dangerous’ object as a small object with a large velocity can also be dangerous and vice versa. These objects have been defined by the authors as a way of evaluating the model but there may be other definitions and hence are problem-dependent.

For the radar, as noted before, we use $ [v, r, cs] $ and the events are defined as,
\begin{gather}
	\nonumber a_1^v: 0 \leq v \leq 10 \text{ }mi/s,\text{ } a_2^v: 15\text{ }mi/s \leq v \leq 35\text{ }mi/s,\\
	\nonumber a_1^r: 0 < r \leq 300 \text{ }mi,\text{ }a_2^r: 300\text{ }mi < r, \\
	a_1^{cs}: 0 < cs \leq 20 \text{ }m^2, \text{ }a_2^{cs}: 15\text{ }m^2 \leq cs \leq 50\text{ }m^2.
\end{gather}
From the telescopic imaging sensor, the features, displacement and aspect ratio, $[d, AR]$ define the following events,
\begin{gather}
	\nonumber a_1^d: 0 \leq d \leq 60 \text{ }pixels,\text{ }a_2^d: 90\text{ }pixels \leq d \leq 210\text{ }pixels,\\
	a_1^{ar}: 0 < AR \leq 1.5, \text{ }a_2^{ar}: 1.5 < AR.
\end{gather}
Furthermore, the objects for classification are defined in terms of these events as,
\begin{gather}
	o_1 (\text{dangerous object}) : \{a_1^r \wedge [(a_2^v \wedge a_2^d) \vee (a_2^{cs} \vee a_2^{ar})]\},\\
	o_2 (\text{safe object}) : \{a_1^v \wedge a_1^d \wedge a_2^r \wedge a_1^{cs} \wedge a_1^{ar}\}.
\end{gather}
Given these events and object definitions, we determine the fused report, $ D_f = \{P_f(o_1), P_f(o_2), P_f(\overline {{o_1} \vee {o_2}})\} $ using our proposed approach. This can be considered a classification problem with 3 classes, Class 1:Object 1, Class 2:Object 2, and Class 3:Neither Object 1 nor Object 2.

\subsection{Dataset 2: Acoustic and Seismic Sensors}
\begin{figure}[h!]
	\centering
	\subfloat[]{\includegraphics[width=0.49\textwidth]{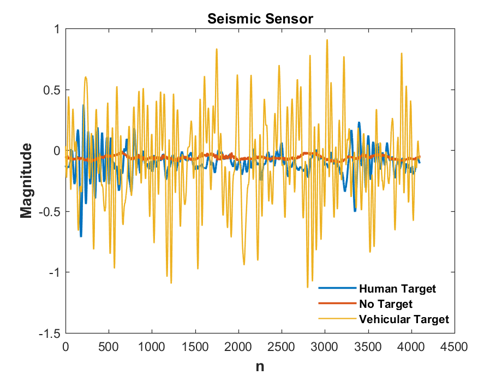}}\hfill
	\subfloat[]{\includegraphics[width=0.49\textwidth]{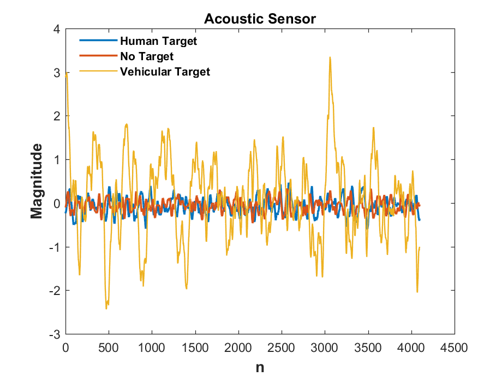}}\hfill
	\caption{Sample (a): seismic sensor observations and (b): acoustic sensor observations for human, vehicular, and no target cases}
	\label{Sample_Obs}
\end{figure}
\begin{figure}
	\centering
	\includegraphics[width=0.95\textwidth]{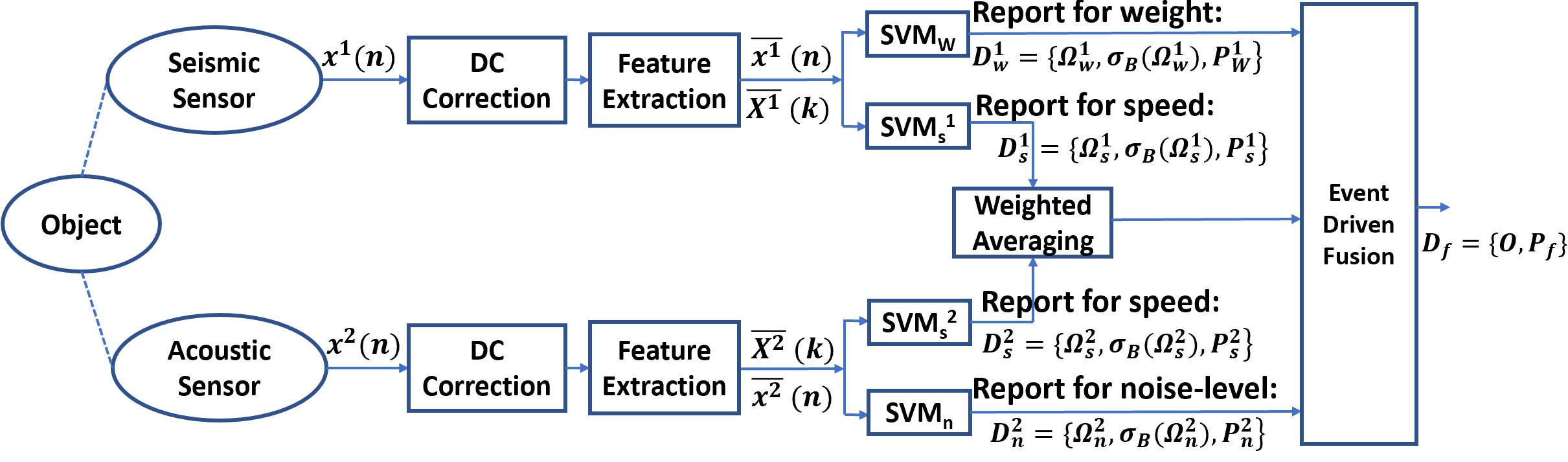}
	\caption{Event Driven Fusion for object detection - Dataset 2}
	\label{algo2}
\end{figure}

A second dataset we use in our experimental validation; is pre-collected data from a network of seismic sensors, and acoustic sensors deployed in a field, where people/vehicles were walking/driven around in specified patterns. Details about this sensor setup and experiments can be found in \cite{nabritt2015personnel}. This dataset has been previously used for target detection in \cite{lee2017accumulative, roheda2018cross}, where, the authors focused on detection of human targets. Here, we use this dataset to classify between human targets, vehicular targets, and no targets. Some data samples from the sensors are shown in Figure \ref{Sample_Obs}

\subsubsection{Experiment Design}
Figure \ref{algo2} shows a high level block diagram for the implementation of Event Driven Fusion for the second dataset. 
Using the training data, SVM classifiers are trained over the corresponding events of interest, as discussed before for the first dataset in Section \ref{impl1}. The seismic sensor provides decisions over the features, target weight and target speed, $[w, s]$. True labels for target weights are provided in the dataset, while those for target speeds are obtained from the GPS data of the target. Similarly, the acoustic sensor provides decisions on the noise-level of the target, and the target speed, $[n, s]$. The two decisions over the target speed are combined into a single report by performing a weighted averaging of the decisions of the two sensors. Here, the weights are selected on the basis of the individual accuracies of the SVMs trained to detect events on target speed.
\subsubsection{Object and Event Definitions}
For training and testing purposes, we define various events over the feature-sets from both the sensors. For the seismic sensor, we use $[w, s]$, while for the acoustic sensor we use $[n, s]$.
\begin{gather}
	\nonumber a_1^w: 96.08 \text{ pounds} \leq w \leq 230.61 \text{ pounds},\\
	\nonumber a_2^w: 1311.61\text{ pounds} \leq w,\\
	\nonumber a_1^s: 0.37 \text{ m/s} < s \leq 2.12 \text{ m/s}, \text{ }a_2^s: 1.7\text{ m/s} \leq s,\\
	a_1^n: n \leq -30 \text{ db},\text{ }a_2^n: -10.6658 db \leq n \leq 7.84 db.
\end{gather}
The range of an event can be determined from the training data. The mean of the feature in question over the samples of the same class is computed, and a range of twice the standard deviation is taken on either side of the mean. 
Furthermore, the targets are defined as,
\begin{gather}
	o_1\text{ }(human\text{ }target): \{a_1^s \wedge (a_1^w \vee a_1^n)\},\\
	o_2\text{ }(vehicular\text{ }target): \{a_2^s \wedge a_2^w \wedge a_2^n\}.
\end{gather}
Given these events and object definitions, we wish to determine the fused report, $ D_f = \{P_f(o_1), P_f(o_2), P_f(\overline {{o_1} \vee {o_2}})\} $, where, $\{\overline{{o_1} \vee {o_2}}\}$ represents the no target case.

\subsection{Performance Analysis}
Table \ref{acctable}, and \ref{acctable2} show the  classification performance of different techniques (averaged over 10 runs of the technique) when implemented on dataset 1 and 2 respectively.
\begin{table}[htbp]
	\caption{Performance Comparison for the First Dataset}
	\label{acctable}
	\begin{center}
		\begin{tabular}{|c|c|c|c|}
			\hline
			\textbf{Method} & \textbf{Average Accuracy}\\
			\hline
			Radar & 86.47\%\\
			\hline
			Telescopic Imaging & 81.31\%\\
			\hline
			Feature Concatenation & 85.93\%\\
			\hline
			Similar Sensor Fusion  & 86.07\%\\ 
			\hline
			Dissimilar Sensor Fusion  & 88.61\%\\ 
			\hline
			Dempster-Shafer Fusion & 87.18\%\\
			\hline
			\textbf{Event Driven Fusion} & \textbf{90.36\%}\\
			\hline
		\end{tabular}
	\end{center}
\end{table}

\begin{table}[htbp]
	\caption{Performance Comparison for the Second Dataset}
	\begin{center}
		\begin{tabular}{|c|c|c|c|}
			\hline
			\textbf{Method} & \textbf{Average Accuracy}\\
			\hline
			Seismic Sensor & 85.41\%\\
			\hline
			Acoustic Sensor & 67.62\%\\  
			\hline
			Feature Concatenation & 81.63\%\\
			\hline
			Similar Sensor Fusion  & 86.69\%\\ 
			\hline
			Dissimilar Sensor Fusion & 89.96\%\\ 
			\hline
			Dempster-Shafer Fusion & 87.93\%\\
			\hline
			\textbf{Event Driven Fusion} & \textbf{92.04\%}\\
			\hline
		\end{tabular}
		\label{acctable2}
	\end{center}
\end{table}

\begin{figure*}[tbp]
	\centering
	\subfloat[ ]{\includegraphics[width = 0.3\textwidth, height=0.2\textheight]{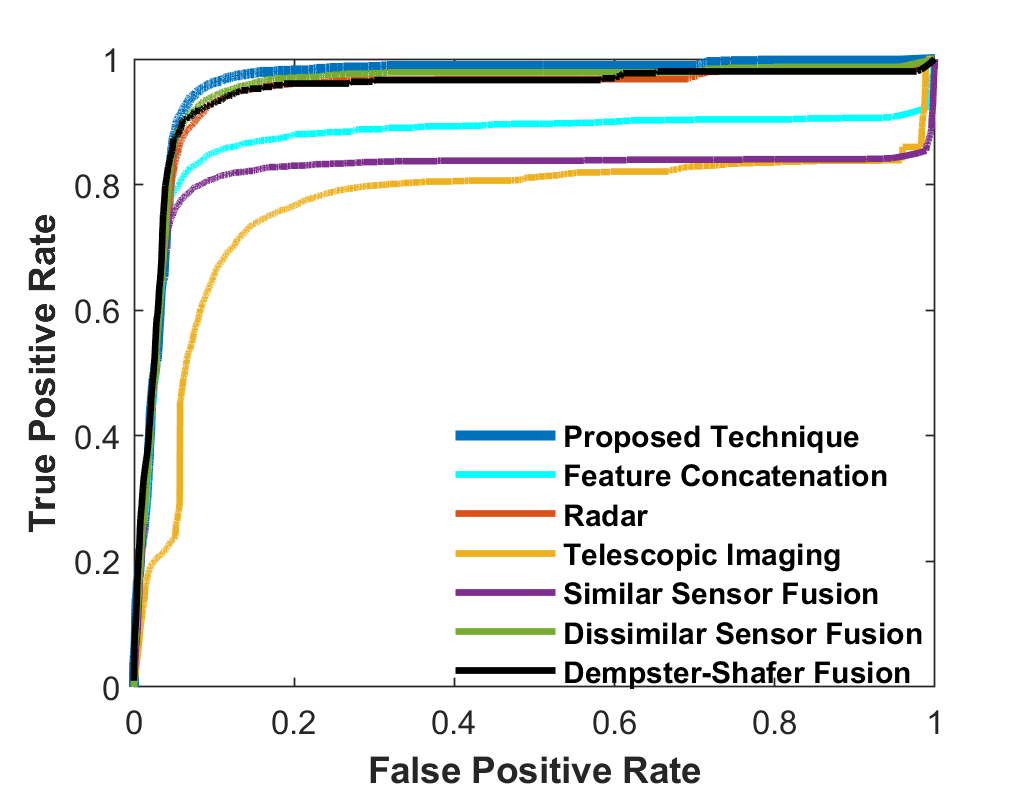}} \hfill 
	\subfloat[ ]{\includegraphics[width = 0.3\textwidth,height=0.2\textheight]{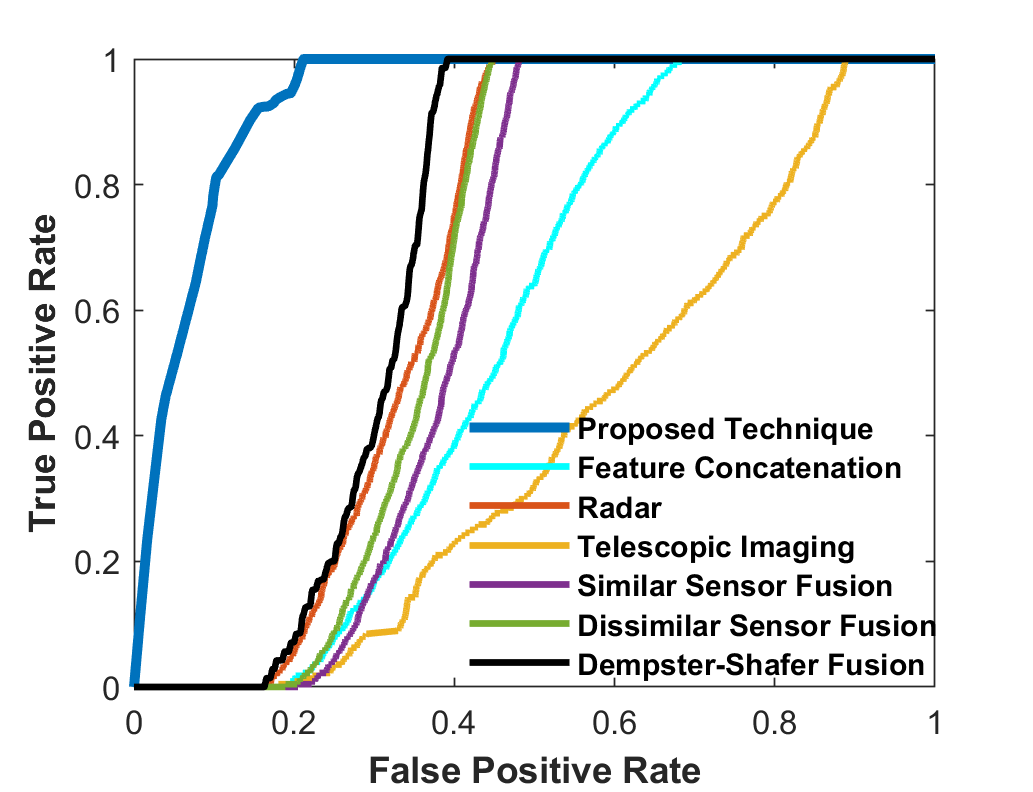}} \hfill
	\subfloat[ ]{\includegraphics[width = 0.3\textwidth,height=0.2\textheight]{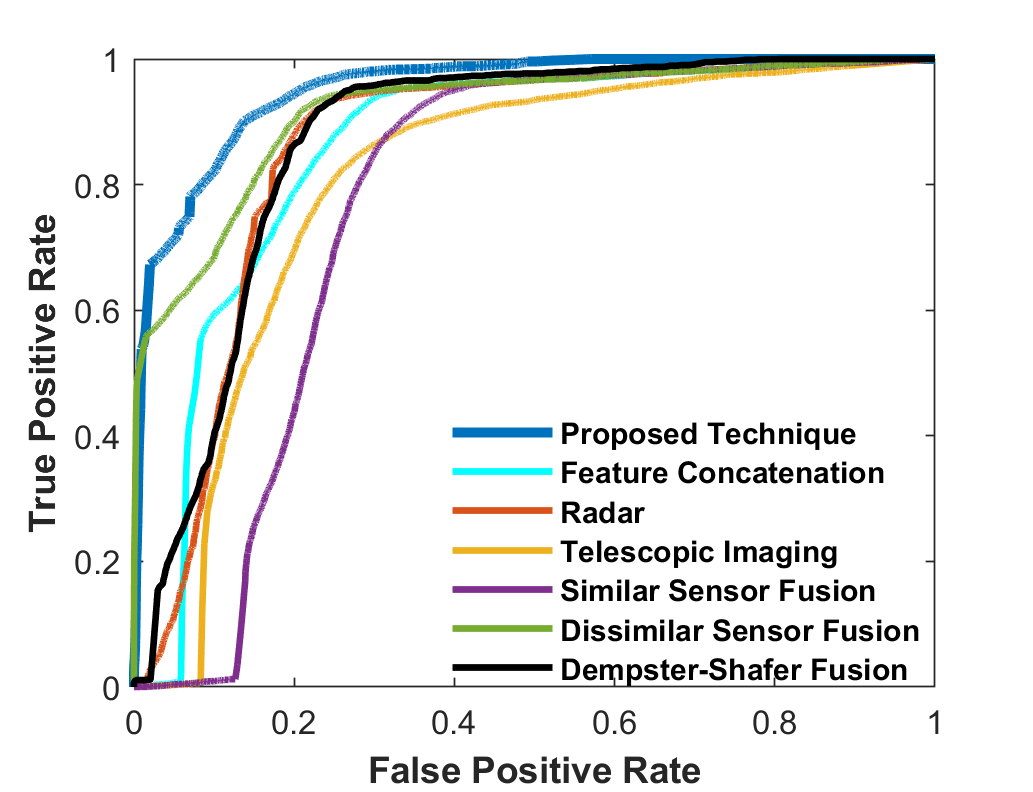}} \hfill
	\caption{\small ROC Curves for detection (Dataset 1) of (a): Class 1:Object 1, (b): Class 2:Object 2, (c): Class 3:Neither Object 1 nor Object 2}
	\label{roccurves}
\end{figure*}

Classification accuracy is often not the best measure to quantify performance, particularly in cases where different classes have different numbers of samples, which is the case here. A better way to compare performance is to look at the Receiver Operating Characteristic (ROC) curves. Fig. \ref{roccurves} and \ref{roccurves2} show the ROC curves for classification for each of datasets 1 and 2 respectively. The true positive rate is used to measure the fraction of samples that were correctly classified as positive. On the other hand, the false positive rate is used to measure the fraction of samples that were incorrectly classified as positive. In our case, true positives for class $k$ are those that were correctly classified as belonging to class $k$, and false positives are those that were classified into class $k$ but do not actually belong to class $k$.
It can be seen from the ROC curves (for dataset 1) in Fig. \ref{roccurves}, that other techniques show limited performance in correct classification of objects from class 2 due to the low number of samples for class 2 in comparison to those in class 1 and class 3. This causes the classifier to bias toward selecting class 1 or class 3 in order to achieve high classification accuracy (even when the sample is from class 2). But, our technique trains over occurrence of events rather than the object itself, hence does not face this issue. 
Improvement in performance is also seen for Dataset 2 (Fig. \ref{roccurves2}). In particular, detection of human targets is significantly improved, by taking 'or' between noise level event and weight event, which reduces misclassification due to noise due to winds.
\subsection{Robustness Evaluation}

\begin{figure*}[tbp] 
	\centering 
	\subfloat[ ]{\includegraphics[width = 0.33\textwidth,height=0.2\textheight]{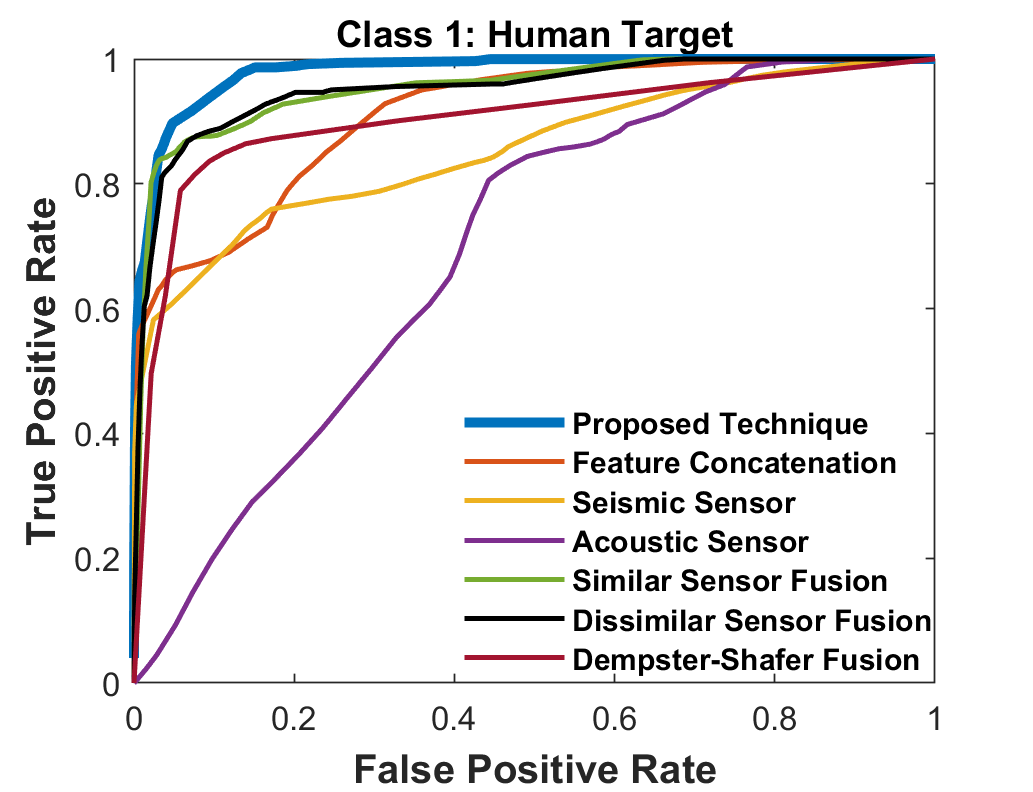}} \hfill
	\subfloat[ ]{\includegraphics[width = 0.33\textwidth,height=0.2\textheight]{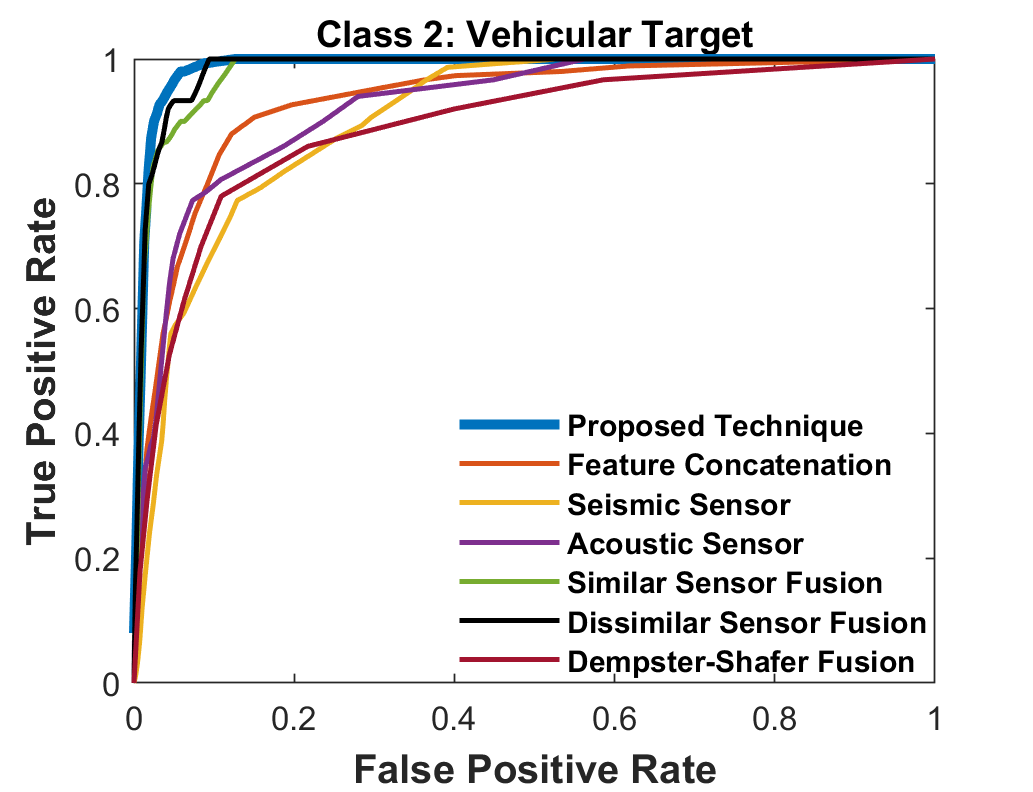}} \hfill
	\subfloat[ ]{\includegraphics[width = 0.33\textwidth, height=0.2\textheight]{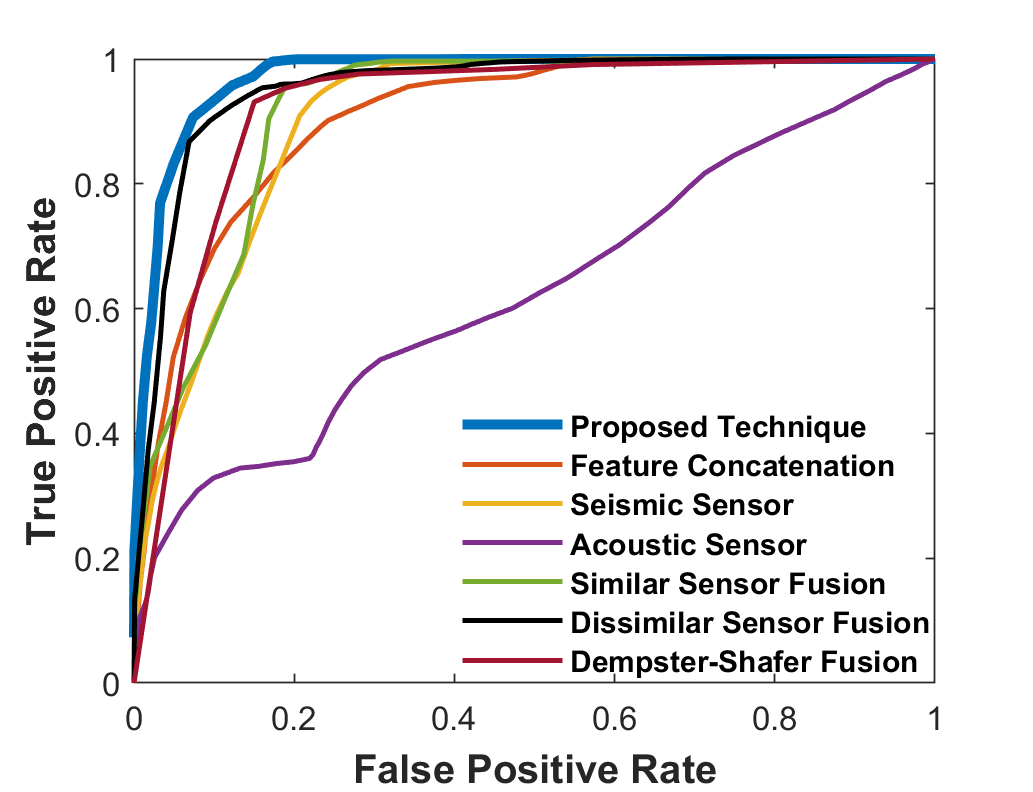}} \hfill
	\caption{\small ROC Curves for detection (second dataset) of (a): Class 1: Human Target, (b): Class 2: Vehicular Target, (c): Class 3: No Target}
	\label{roccurves2}
\end{figure*}

\begin{figure*}
	\centering
	\subfloat[ ]{\includegraphics[width = \textwidth]{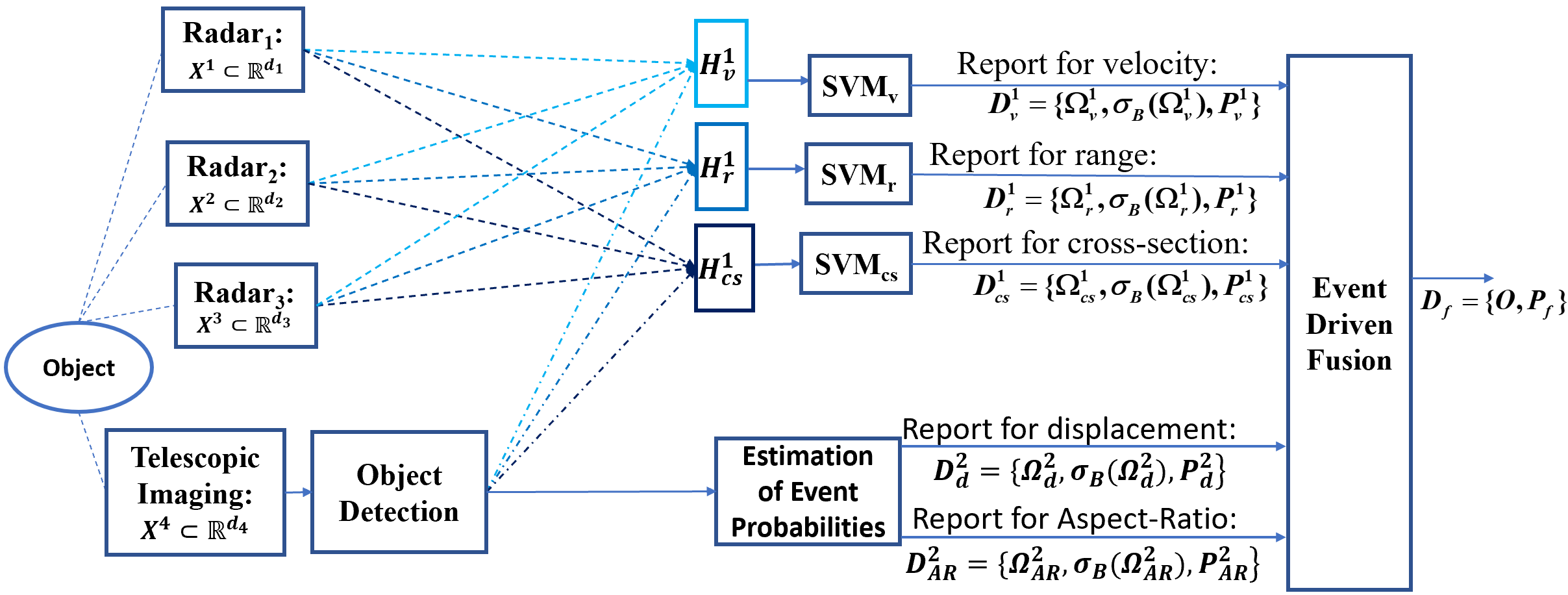}} \hfill
	\subfloat[ ]{\includegraphics[width = 0.85\textwidth]{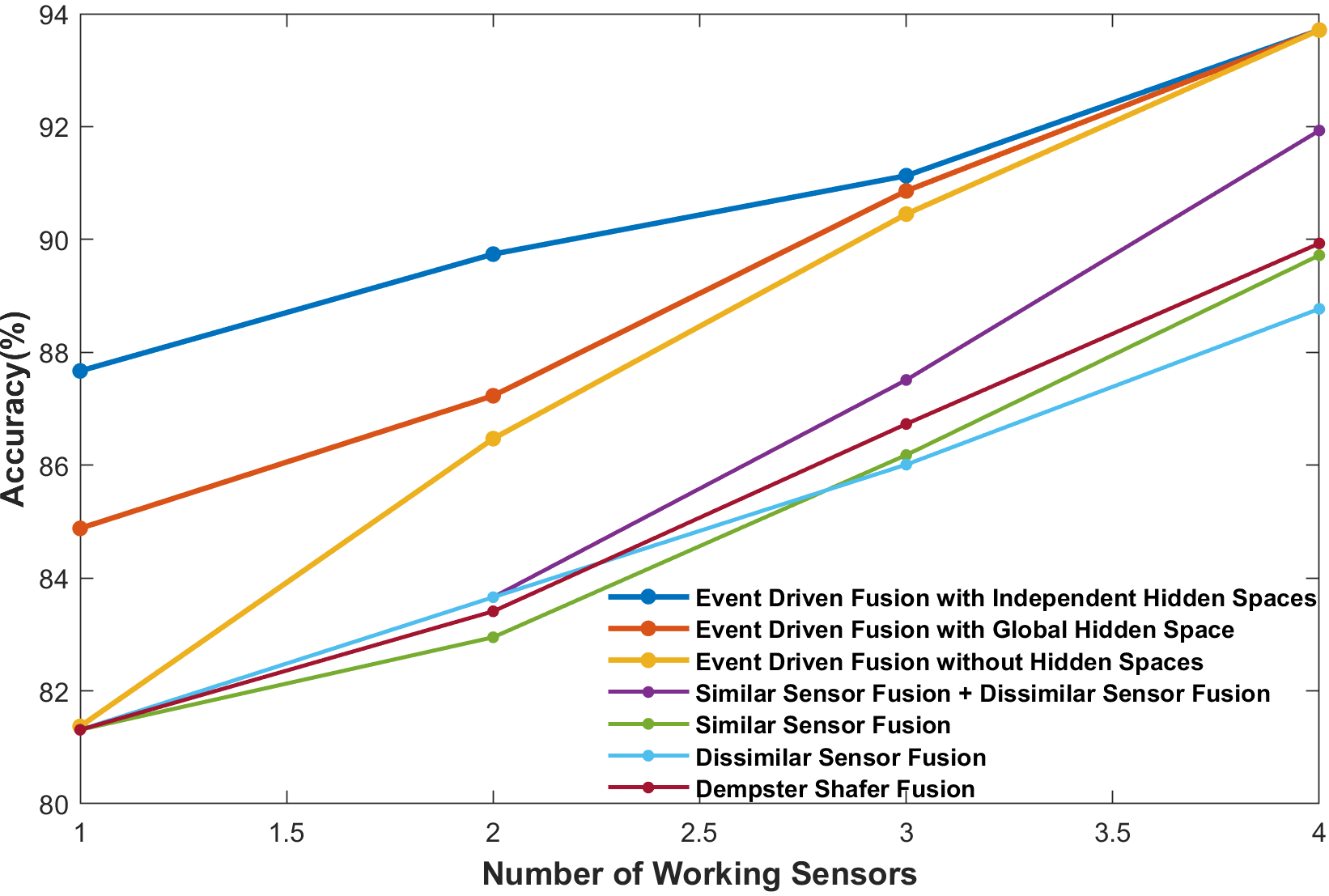}} \hfill
	\caption{(a): Using Independent Hidden Spaces to deal with a damaged radar sensor, (b): Comparison of proposed approach with existing techniques, when sensors get damaged}
	\label{IHS_1}
\end{figure*}

In order to evaluate the robustness of the proposed algorithm, we consider a damaged sensor scenario that was discussed in Section \ref{damaged_sensors}. For the first dataset, we consider the situation where 3 radar sensors (simulated with different Signal to Noise Ratios) are used along with a telescopic sensor, and a subset of the radar sensors are damaged at a given time (Figure \ref{IHS_1}-(a)). The Global Hidden Space, and Independent Hidden Spaces approaches are evaluated and compared with the common approach of ignoring the damaged sensors in Figure \ref{IHS_1}-(b). The `Similar Sensor Fusion + Dissimilar Sensor Fusion' case in Figure \ref{IHS_1}-(b) refers to the fusion of the radar sensors using Similar Sensor Fusion, followed by fusion with telescopic sensor using Dissimilar Sensor Fusion.  As the number of working sensors is reduced, the target detection performance is impacted. We note that the exploitation of Independent Hidden Spaces along with Event Driven Fusion allows for a more graceful degradation. In the second dataset case, we consider the following scenarios, 1) Seismic Sensor is damaged 2) Acoustic Sensor is damaged (Figure \ref{IHS_2}). The results for these two cases are shown in Table \ref{acctable_damaged}. A similar observation can be made here as in spite of one of the damaged sensors during testing, the prior information from the training phase allows us to learn a transform that helps boost the performance of the working sensor, hence gracefully mitigating the impact.
\begin{table}
	\caption{Robustness Analysis for the second dataset}
	\label{acctable_damaged}
	\begin{center}
		\begin{tabular}{|M{6cm}|M{1.5cm}|}
			\hline
			\textbf{Method} & \textbf{Average Accuracy}\\
			\hline
			Seismic Sensor & 85.41\%\\
			\hline
			Acoustic Sensor & 67.62\%\\
			\hline
			Feature Concatenation & 81.63\%\\
			\hline
			Similar Sensor Fusion  & 86.69\%\\ 
			\hline
			Dissimilar Sensor Fusion  & 89.96\%\\ 
			\hline
			Dempster-Shafer Fusion & 87.93\%\\
			\hline
			\textbf{Event Driven Fusion} & \textbf{92.04\%}\\
			\hline 
			\small Event Driven Fusion with Global Hidden Space (Damaged Seismic Sensor) & \small68.33\%\\
			\hline 
			\small Event Driven Fusion with Global Hidden Space (Damaged Acoustic Sensor) & \small 86.13\%\\
			\hline
			\small \textbf{Event Driven Fusion with Independent Hidden Spaces (Damaged Seismic Sensor)} & \small \textbf{71.66\%} \\
			\hline
			\small \textbf{Event Driven Fusion with Independent Hidden Spaces (Damaged Acoustic Sensor)} & \small \textbf{87.36\%} \\
			\hline
		\end{tabular}
	\end{center}
\end{table}
\begin{figure}[h!]
	\centering
	\includegraphics[width=0.95\textwidth]{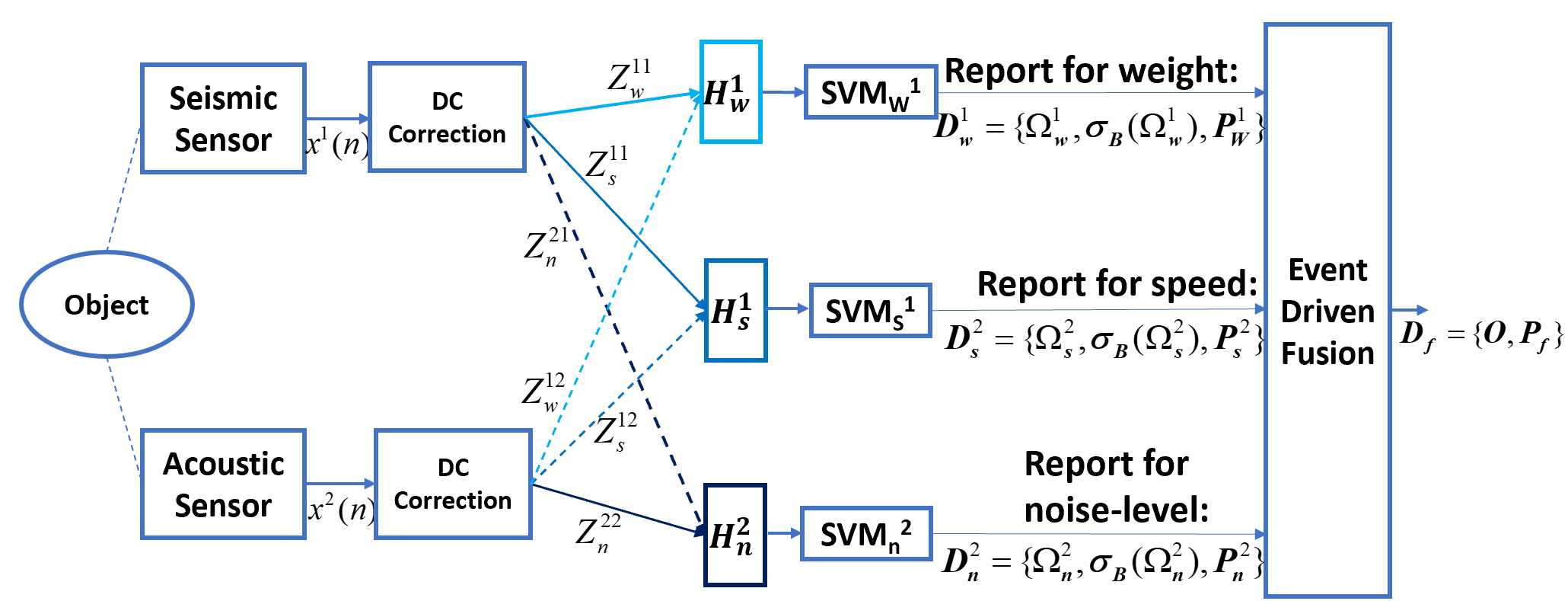}
	\caption{Using Independent Hidden Spaces to deal with damaged Seismic/Acoustic sensors}
	\label{IHS_2}
\end{figure}

\section{Conclusion}
We proposed a novel sensor fusion technique that looks at targets as combinations of probabilistic events defined over the feature set used to construct a sigma-field all the while considering the extent of dependence between different features. Experiments on various datasets showed that the proposed technique can outperform existing fusion techniques on the decision level. 
We also propose a technique to safeguard detection performance of the model when sensors are damaged during the implementation phase by leveraging the prior information available during training. 





\end{document}